\documentclass{sig-alternate}


\pdfoutput=1


\usepackage[usenames,dvipsnames]{color}
\usepackage[usenames,dvipsnames,svgnames,table]{xcolor}
\usepackage{graphicx}
\usepackage{subfigure}
\usepackage{verbatim}
\usepackage{hyperref}
\usepackage{url}
\usepackage{xspace}
\usepackage{wrapfig}
\usepackage{dsfont}
\usepackage{multirow}
\usepackage{caption}
\captionsetup[table]{skip=0.1cm}
\usepackage{etoolbox}



\newif\ifconfver
\newif\iffullver
\newif\ifdebug

\debugfalse
\confverfalse

\ifconfver
    \fullverfalse
\else
    \fullvertrue
\fi

\ifdebug
    \definecolor{mygreen}{rgb}{0.0, 0.5, 0.0}
    \definecolor{myorange}{rgb}{1.0, 0.49, 0.0}
    \definecolor{mypink}{rgb}{1.0, 0.01, 0.24}
    \definecolor{myblue}{rgb}{0.0, 0.0, 1.0}
\else
    \definecolor{mygreen}{rgb}{0.0, 0.0, 0.0}
    \definecolor{myorange}{rgb}{0.0, 0.0, 0.0}
    \definecolor{mypink}{rgb}{0.0, 0.0, 0.0}
    \definecolor{myblue}{rgb}{0.0, 0.0, 0.0}
\fi

\newenvironment{changed}[0]{\color{myblue}}{} 

\ifconfver
    \newenvironment{fullver}[0]{\color{mygreen}\expandafter\comment}{\expandafter\endcomment}
    \newenvironment{confver}[0]{\color{myorange}}{}
\else
    \newenvironment{fullver}[0]{\color{mygreen}}{}
    \newenvironment{confver}[0]{\color{myorange}\expandafter\comment}{\expandafter\endcomment}
\fi

\iffullver
    \pagenumbering{arabic}

    \makeatletter
    \def\@copyrightspace{\relax}
    \makeatother
\fi

\newcommand{\onebb}{\mathds{1}} 
\newcommand{\calh}{\mathcal{H}}
\newcommand{\fhat}{\hat{f}}
\newcommand{\newfhat}{f^\dagger}
\newcommand{\bigoh}[1]{\mathcal{O}(#1)}

\newcommand{\tuple}[1]{\ensuremath{\left \langle #1 \right \rangle }}
\DeclareMathOperator*{\argmin}{arg\,min}

\newcommand{\sentpos}[0]{``$+$''\xspace}
\newcommand{\sentneg}[0]{``$-$''\xspace}


\newcommand{\memd}{\text{EMD}}
\newcommand{\emd}{$\memd$\xspace}

\newcommand{\memdbanks}{\text{EMD}^{\alpha}}
\newcommand{\emdbanks}{$\memdbanks$\xspace}

\newcommand{\memdpelwer}{\widehat{\text{EMD}}}
\newcommand{\emdpelwer}{$\memdpelwer$\xspace}

\newcommand{\memdnew}{\text{EMD}^{\star}}
\newcommand{\emdnew}{$\memdnew$\xspace}

\newcommand{\memdsent}{\text{SND}}
\newcommand{\emdsent}{$\memdsent$\xspace}

\newcommand{\dex}{\widetilde{D}}
\newcommand{\dexpp}{\widetilde{D(P, +)}}
\newcommand{\pex}{\widetilde{P}}
\newcommand{\pexp}{\widetilde{P^+}}
\newcommand{\qex}{\widetilde{Q}}
\newcommand{\qexp}{\widetilde{Q^+}}

\newtheorem{theorem}{Theorem}

\newtheorem{corollary}{Corollary}
\newtheorem{lemma}{Lemma}

\newcommand{\vsa}{\vspace*{-0.1cm}}
\newcommand{\vsb}{\vspace*{-0.2cm}}
\newcommand{\vsbb}{\vspace*{-0.3cm}}
\newcommand{\vsc}{\vspace*{-0.4cm}}

\begin{document}


\conferenceinfo{WWW}{'16 Montr{\'e}al, Canada}

\title{A Distance Measure for the Analysis of \\
	Polar Opinion Dynamics in Social Networks}
	

\numberofauthors{3}
\author{
    \alignauthor
    Victor Amelkin\\
    \affaddr{University of California}\\
    \iffullver
        \affaddr{Santa Barbara, CA}\\
    \fi
    \email{victor@cs.ucsb.edu}
    \alignauthor
    Ambuj K. Singh\\
    \affaddr{University of California}\\
    \iffullver
        \affaddr{Santa Barbara, CA}\\
    \fi
    \email{ambuj@cs.ucsb.edu}
    \alignauthor
    Petko Bogdanov\\
    \affaddr{University at Albany -- SUNY}\\
    \iffullver
        \affaddr{Albany, NY}\\
    \fi
    \email{pbogdanov@albany.edu}
}

\maketitle

\vspace{-0.1in}

\begin{abstract}
Analysis of opinion dynamics in social networks plays an important role
in today's life. For applications such as predicting users' political
preference, it is particularly important to be able to analyze the dynamics
of competing opinions. \emph{While observing the evolution of polar
opinions of a social network's users over time, can we tell when the
network ``behaved'' abnormally? Furthermore, can we predict how the opinions of the users will change in the future? Do opinions evolve according to existing network opinion dynamics models?} To answer such questions, it is not sufficient to study individual user behavior, since opinions can spread far beyond users' egonets. We need a
method to analyze opinion dynamics of all network users simultaneously
and capture the effect of individuals' behavior on the global evolution pattern
of the social network.

In this work, we introduce Social Network Distance (\emdsent)
---a distance measure that quantifies the ``cost''
of evolution of one snapshot of a social network into another snapshot under
various models of polar opinion propagation.
\emdsent has a rich
semantics of a transportation problem, yet, is computable in time linear in the number of users, which makes \emdsent applicable
to the analysis of large-scale online social networks. In our experiments
with synthetic and real-world Twitter data, we demonstrate the utility of our distance measure for
anomalous event detection. It achieves a true positive rate of $0.83$,
twice as high as that of alternatives.
When employed for opinion prediction in Twitter,
our method's accuracy is $75.63\%$, which is $7.5\%$ higher
than that of the next best method.

\vspace{0.05in}
\noindent{\textbf{Code:}} \url{http://cs.ucsb.edu/~victor/pub/ucsb/dbl/snd/}
\end{abstract}

\vsc
\section{Introduction}

Analysis of people's opinions plays an important role in today's life, and
social networks provide a great platform for such research. Businesses are
interested in advertising their products in social networks relying on
viral marketing. Political strategists are interested in predicting an
election outcome based on the observed sentiment change of a sample of voters.
Mass media and security analysts may be interested in a timely discovery of
anomalies based on how a social network ``behaves''. Thus, it is important
to have methods for the analysis of how user opinions evolve in a social
network. 

\emph{How can we quantify the change in opinions of users with respect to
their expected behavior in a social network? Can we distinguish opinion
shifts caused by in-network user interaction from those caused by
factors external to the network?}
To answer such questions, we need a distance
measure that explicitly models opinion dynamics, incorporating both the
distribution of user opinions at two time instances and the
network structure that defines the pathways for opinion dissemination.
In this work, we develop such a distance measure for
snapshots of a social network and employ it for the
analysis of competing opinion dynamics.

While the dynamics of a social network can be characterized
by evolution of both the network's structure and user opinions~\cite{myers2014bursty},
in this paper we focus on the opinion dynamics.
We assume that there are two \emph{polar opinions} in the network,
\emph{positive} \sentpos and \emph{negative} \sentneg. Users having no
or an unknown opinion are termed \emph{neutral}, while those expressing
opinion---\emph{active}. A \emph{network state} is comprised of the opinions of all
network users at a given time. Polar opinions \emph{compete} in
that a user having a positive opinion is unlikely to enthusiastically
spread information about the adverse negative opinion, yet, would spread
information about the friendly positive opinion ``at a discount cost''.
Such competition may arise when the notions the opinions relate to are inherently
competing. For example, in an election, the voters leaning toward one political
party are unlikely to spread positive rumors about the competing party. Another
example is viral marketing, where the consumers who favor smartphones of one
brand may readily express their affection to it, but may be unwilling to praise
the competing brand.

Given a time series of network states, we address the applications of detecting
anomalous network states and predicting opinions of individual users. For the first application,
we answer the question of which network states are anomalous with respect to the
observed evolution of the network. For the second application, we predict currently
unknown opinions of selected users in the network based on the network's observed
past and present behavior.

The analysis of a time series of network states is, however, complicated,
because network states do not naturally belong to any vector space, and the
existing time series analysis techniques cannot be readily applied. Our
approach is to treat network states as members of a metric space induced
by a distance measure governed by both the network's structure and user
opinions. We design a semantically and mathematically appealing as well as
efficiently computable distance measure \emph{Social Network Distance (\emdsent)}
for the comparison of social network states containing polar opinions, and
demonstrate its applicability to the analysis of real-world data.

To quantify the dissimilarity between network states, \emdsent takes into account how information propagates in the network. A change of a given user's opinion from, say, neutral to positive, contributes to the overall distance between the corresponding network states by reflecting the likelihood of this user's opinion change based on the opinions and locations of other users in the network under a chosen model of polar opinion dynamics. However, since the network users interact, the distance measure needs to consider the opinion shifts of all users simultaneously. Thus, we define \emdsent as a transportation problem that models opinion spread and adoption in the network. In particular, by making the transportation costs dependent on both the network's topology and the opinions of the users conducting information in the network, we capture the competitive aspect of polar opinion propagation.

The summary of \emph{our contributions} is as follows:\\
\noindent $\bullet$ We propose \emdsent---the first distance measure suitable for comparison of social network states containing competing opinions under various models of opinion dynamics.\\
\noindent$\bullet$ We develop a scalable method to precisely compute \emdsent in time linear in
        the number of users in the network, thus, making \emdsent applicable to the
        analysis of real-world online social networks.\\ 
\noindent$\bullet$ We demonstrate the applicability of our distance measure using
        both synthetic and real-world data from Twitter. In detecting anomalous
        states of a social network, \emdsent is superior to other distance
        measures
        in discovering controversial events that have polarized the society.
        In user opinion prediction experiments, \emdsent also outperforms competitors
        in terms of prediction accuracy.

\vsbb
\section{Earth Mover's Distance
and\\ Network State Comparison}

A good distance measure for the states of a social network should take
into account the specifics of polar opinion
propagation in a network. For example, a user having opinion \sentpos should not
actively participate in the dissemination of opinion \sentneg, or, at least, the
dissemination of an adverse opinion should incur a large cost. On the other hand,
this user should disseminate friendly opinion \sentpos at a cost lower than the
cost a neutral user would incur. Thus, we propose to address the problem of
comparing states of a social network as a transportation problem where the
costs of opinion propagation are defined based on the shortest paths between
the users in the network, computed taking into account both
the network structure and the user opinions.

This high-level intuition about network state comparison as
an opinion transportation problem inadvertently leads us to one of the well-studied
distance measures---Earth Mover's Distance (EMD). Originally, defined as a dissimilarity
measure for histograms~\cite{emd-rubner}, EMD can be used for the comparison
of network states viewed as histograms, with histogram bins' values quantifying individual user opinions.
Intuitively, EMD measures the costs of optimal transformation
of one histogram into another with respect to the \emph{ground distance} specifying
the costs of moving mass between bins. In our case, the ground distance
is defined based on the shortest paths between the users of the network, where
the shortest paths depend on both the network's topology as well as the opinions of
the users facilitating opinion propagation.

Formally, given two real-valued histograms $P = [P_1, \dots, P_n]$
and $Q = [Q_1, \dots, Q_m]$, EMD between them with respect to a cross-bin
ground distance $\{D_{ij}\}_{n \times m}$ is defined as the solution to the
problem of optimal mass transportation from the bins of $P$ (suppliers) to
the bins of $Q$ (consumers) with respect to transportation costs $D$.
\vsa
\begin{align}
	\memd(P, Q, D) =
		\sum\limits_{i=1}^{n}{\sum\limits_{j=1}^{m}{
		    D_{ij} \widehat{f}_{ij}
		}} ~\big/~ 
		\sum\limits_{i=1}^{n}{\sum\limits_{j=1}^{m}{
		    \widehat{f}_{ij}
		}},
\label{eq:emd}
\end{align}
\vsb

\noindent where $\{\widehat{f}_{ij}\}_{n \times m}$ is an optimal transportation plan in the following transportation problem:
\begin{align*}
	&\sum\limits_{i=1}^{n}{\sum\limits_{j=1}^{m}{
			f_{ij} D_{ij}
	}} \to \min,~
	\sum\limits_{i=1}^{n}{\sum\limits_{j=1}^{m}{
			f_{ij}
	}} = \min\left\{
    	    \sum\limits_{i=1}^{n}{P_i},
    	    \sum\limits_{j=1}^{m}{Q_j}
	    \right\},\\
	&f_{ij} \geq 0,~
    \sum\limits_{j=1}^{m}{ 
		f_{ij}
	} \leq P_i,~
    \sum\limits_{i=1}^{n}{
		f_{ij}
	} \leq Q_j,~
    (1 \leq i \leq n, 1 \leq j \leq m).
\end{align*}

\emd is attractive not only because its ground distance can capture how opinions
propagate in the underlying network, but it is also driven by node states rather
than network topology, is spatially-sensitive, and metric under
\ifconfver
certain conditions~\cite{emd-rubner}.
\else
the following conditions.
\fi
\iffullver
\begin{fullver}
    \begin{theorem}{\emph{(Rubner et al.~\cite{emd-rubner})}}
        If all histograms under comparison have equal total masses, and the underlying ground distance
        is metric, then \emd is metric.
        \label{thm:emd-metric}
    \end{theorem}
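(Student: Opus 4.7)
The plan is to verify the four metric axioms directly from the definition, exploiting the simplification that occurs when all histograms have the same total mass $M$. Under the equal-mass assumption, the denominator in~(\ref{eq:emd}) becomes $M$, and every feasible transportation plan must saturate both marginals, i.e., $\sum_j f_{ij} = P_i$ and $\sum_i f_{ij} = Q_j$. This turns EMD into a proper Wasserstein/Kantorovich distance, for which metricity is classical; the task is to spell out the argument in the present notation.

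Non-negativity is immediate since $D_{ij} \geq 0$ and $\widehat{f}_{ij} \geq 0$. For identity of indiscernibles, if $P = Q$, the diagonal plan $\widehat{f}_{ii} = P_i$, $\widehat{f}_{ij} = 0$ for $i \neq j$, is feasible with cost $\sum_i P_i D_{ii} = 0$, since $D$ being a metric forces $D_{ii} = 0$. Conversely, if $\memd(P,Q,D) = 0$, then every positive entry $\widehat{f}_{ij}$ must satisfy $D_{ij} = 0$, which by metricity forces $i = j$; the marginal constraints then imply $P_i = \widehat{f}_{ii} = Q_i$. Symmetry follows because $D$ is symmetric and the transpose $\widehat{f}^{\top}$ of an optimal plan from $P$ to $Q$ is feasible from $Q$ to $P$ with identical cost, and vice versa.

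The main obstacle is the triangle inequality $\memd(P,R,D) \leq \memd(P,Q,D) + \memd(Q,R,D)$, which requires constructing an explicit plan rather than just invoking a general principle. Let $\{\alpha_{ij}\}$ be an optimal plan from $P$ to $Q$ and $\{\beta_{jk}\}$ an optimal plan from $Q$ to $R$. The key idea is the classical gluing construction: define
\begin{align*}
\gamma_{ik} \;=\; \sum_{j:\, Q_j > 0} \frac{\alpha_{ij}\,\beta_{jk}}{Q_j},
\end{align*}
which routes each unit of mass from $P_i$ through an intermediate bin $Q_j$ to $R_k$ in proportion to the two optimal flows. Using $\sum_i \alpha_{ij} = Q_j = \sum_k \beta_{jk}$, one checks that $\gamma$ satisfies the marginal constraints for $(P, R)$, hence is feasible (bins $j$ with $Q_j = 0$ contribute nothing on either side and can be discarded).

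Finally, the triangle inequality on $D$ gives $D_{ik} \leq D_{ij} + D_{jk}$ for every triple, so
\begin{align*}
M \cdot \memd(P,R,D) \;\leq\; \sum_{i,k} D_{ik}\,\gamma_{ik}
\;\leq\; \sum_{i,j,k} \frac{(D_{ij} + D_{jk})\,\alpha_{ij}\,\beta_{jk}}{Q_j},
\end{align*}
and summing out $k$ in the $D_{ij}$ term and $i$ in the $D_{jk}$ term (using the marginal identities) yields exactly $M \cdot \memd(P,Q,D) + M \cdot \memd(Q,R,D)$. Dividing by $M$ closes the argument. The only subtle point is the handling of bins with $Q_j = 0$, which must be done with a short case analysis to keep $\gamma$ well-defined.
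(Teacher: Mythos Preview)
Your argument is correct and is exactly the classical Kantorovich--Wasserstein proof: reduce to saturated marginals under the equal-mass hypothesis, check the easy axioms, and obtain the triangle inequality via the gluing plan $\gamma_{ik} = \sum_{j:Q_j>0} \alpha_{ij}\beta_{jk}/Q_j$. The treatment of bins with $Q_j = 0$ is fine as stated, since $\alpha_{ij} = \beta_{jk} = 0$ for such $j$ by the marginal constraints.

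There is nothing to compare against in the paper itself, however: Theorem~\ref{thm:emd-metric} is quoted from Rubner et al.\ without proof and is used only as a black box inside the proof of Theorem~\ref{thm:emdnew-metricity}. So your write-up supplies a proof the paper deliberately omits; it is the standard one and matches what appears in the cited source.
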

\end{fullver}
\fi
In the following section, we use \emd to construct
a distance measure for network states containing polar opinions.

\vsbb
\section{Distance Measure for Network\\ States with Polar Opinions}
Given a network $G = \tuple{V, E}$, where $V$ $(|V| = n)$ is the set of nodes (users)
and $E$ is the set of edges (social ties), we want to compute the distance between two of
 its states $P = [P_1, \dots, P_n]^T$ and $Q = [Q_1, \dots, Q_n]^T$. While generalizations
 are possible, we use
 an intuitive and simple scheme for polar opinion quantification: if user $i$ has opinion \sentpos in network state $P$, then $P_i = +1$;
$P_i = -1$ if the user's opinion is \sentneg; and $P_i = 0$ if the user is neutral\footnote{
    \begin{changed}
    There is a great body of research on methods for opinion classification based on
    user-generated content. In this work, however, we
    assume that we have access only to the quantified opinions, and no access to the
    user-generated content (e.g., tweets), which may be unavailable due to privacy
    reasons.
    \end{changed}
}.

Despite the appeal of \emd, it is not readily applicable to the comparison of
network states $P$ and $Q$, since
(i) the users' behavior may change in the process of opinion propagation,
while a transportation problem underlying \emd operates with static
transportation costs; (ii) \emd is defined for histograms of a
homogeneous quantity, while $P$ and $Q$ contain both positive and negative values; and (iii) it is not clear how
to incorporate the node states into the definition of the ground distance. In
order to define our Social Network Distance (\emdsent), we address these three
problems in what follows.

\vspace{0.01in}
\noindent\emph{\bf (i) \emdsent as a transportation problem.}
For two given users $u$ and $v$, not necessarily being immediate neighbors in
the network, the cost of $v$'s adopting opinion from $u$ depends not only on
the states and locations of $u$ and $v$ in the network, but also on the states
of the users through which $u$'s opinion can reach $v$.
\ifconfver
    \vspace{-0.12in}
\fi
\begin{figure}[h!]
    \centering
	\includegraphics[width=2.8in]{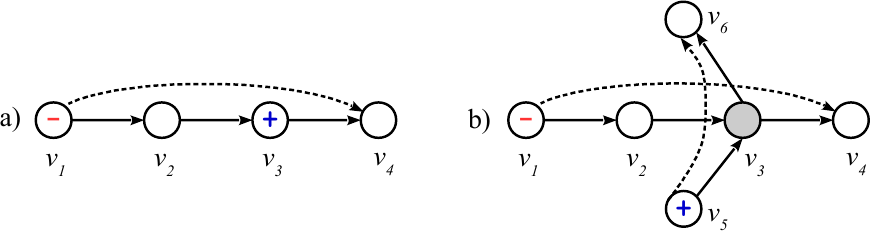}
	\vsb
	\caption{Transitive opinion propagation in a social network. The solid and dashed arrows represent social ties and opinion flow, respectively.}
	\label{fig:social-net-trans-learn}
\end{figure}
\ifconfver
    \vspace{-0.15in}
\else
    
\fi

In Fig.~\ref{fig:social-net-trans-learn}.a, user $v_3$ having opinion
\sentpos affects the cost of propagating opinion \sentneg from user $v_1$
to user $v_4$. In Fig.~\ref{fig:social-net-trans-learn}.b, however, user
$v_3$ is initially neutral, but can become active in the process
of $v_5$'s propagating \sentpos to $v_6$ before opinion \sentneg from $v_1$
reaches $v_3$, thereby, impeding the spread of \sentneg from $v_1$ to $v_4$.
In order to pose \emdsent as a transportation problem, we will
assume that the costs of opinion propagation depend only on the opinions of
the currently active users, taking no account of the potential change of
user opinions in the process of opinion transportation.

\vspace{0.10cm}
\noindent\emph{\bf (ii) Handling polar opinions.} We design \emdsent to measure the optimal cost
of transforming one network state $G_1$ into another network state $G_2$ by the
means of opinion transportation. The active users of $G_1$ are the suppliers
and those in $G_2$ are the consumers in the transportation problem
setting. In defining the transportation costs, we assume that users adopting, say, opinion
\sentpos, in $G_2$ are affected by others of the same opinion during the propagation. 
Similarly, suppliers only propagate opinions to the consumers of the same type.
Thus, the set of constraints in the transportation problem can be divided into
two non-overlapping subsets for two kinds of opinions. Consequently, the
problem of optimal transportation of opinions from suppliers of $G_1$ to
consumers of $G_2$ can be split into two transportation problems:
one for transporting the opinions of each kind.

\vspace{0.10cm}
\noindent\emph{\bf(iii) Defining the ground distance.}
The cost of opinion propagation from user $u$ to user $v$ depends on their
topological proximity, \begin{changed}how frequently they communicate\end{changed}, persuasiveness, and
stubbornness of $u$ and $v$ as well as the users ``separating'' them.
Formally, the ground distance $D(G_i, op) \in {\mathbb{R}^+}^{n \times n}$,
reflecting the costs of propagating opinion $op$ through a network in state $G_i$,
is a matrix containing the lengths of the shortest paths in a network with its adjacency
matrix defined as:

\vsc
\begin{align}
	A^{ext}&(G_i, op) = \notag \\
	-&\log \mathbb{P}(G_i, op) - \log \mathbb{P}^{in}(G_i, op) - \log \mathbb{P}^{out}(G_i, op),
    \label{eq:emdsent-extended-adjmat}
\end{align}
\vsc

\noindent where the summands on the right are $n$-by-$n$ matrices of log-probabilities
of communication,
opinion adoption, and opinion spreading, respectively. Probabilities $\mathbb{P}(G_i, op)$
can be defined as the relative frequencies of communication between users.
In the absence of such information, we set $-\log{\mathbb{P}(G_i, op)}$ to be the
connectivity matrix of the network, penalizing for the users' topological remoteness.
Opinion adoption probabilities $\mathbb{P}^{in}(G_i, op)$
reflect users' susceptibility/stubbornness~\cite{yildiz2011discrete}. If such
information is unavailable, for each existing edge $\tuple{u,v}$, we set $\mathbb{P}^{in}_{uv} = 1$,
\begin{changed}so that all users are non-stubborn and equally receptive to persuasion\end{changed}.
Finally, the opinion spreading penalties $-\log \mathbb{P}^{out}(G_i, op)$ are defined
based on a particular opinion dynamics model. Several ways to make such an assignment
are described below.

\begin{changed}
\emph{Model-agnostic Opinion Propagation:}
If there is no evidence that opinions evolve in the network according to a specific
opinion dynamics model, then the opinion spreading penalties can be defined as\end{changed}

\vsc
{\small
\begin{align*}
	-\log \mathbb{P}^{out}_{uv}(G_i, op) = \begin{cases}
		c_{adverse} & \text{if } G_i[u] \neq op \vee G_i[v] = -op,\\
		c_{neutral} & \text{if } G_i[u] = 0,\\
		c_{friendly} & \text{if } G_i[u] = op,
	\end{cases}
\end{align*}
}
\vsbb

\noindent where $c_{adverse}$, $c_{neutral}$, $c_{friendly}$ are constant penalties for
spreading opinion $op$ by the users having respectively adverse, neutral, or friendly opinion
relative to $op$\begin{changed}, and $G_i[u]$ is the opinion of user $u$ in network state $G_i$\end{changed}.
This simple \begin{changed}definition implies\end{changed} that users willingly
spread opinions similar to their own ($c_{friendly}$ is small); are unwilling to
spread adverse opinions ($c_{adverse}$ is large); with neutral users' behavior
being somewhere in-between ($c_{friendly} < c_{neutral} < c_{adverse}$).

\begin{changed} Alternatively, $\mathbb{P}^{out}(G_i, op)$ can be defined via one of the existing
opinion dynamics models discussed next.\end{changed}

\vspace{0.02in}
\emph{Independent Cascade Model:} The distance-based
model of Carnes et al.~\cite{carnes2007maximizing} is a version of Independent Cascade Model
capturing opinion competition. In this model, we have two sets of initial
adopters $I_+$, $I_-$ of opinions \sentpos and \sentneg, respectively, with
$I = I_+ \cup I_-$. Each edge $\tuple{u,v}$ is labeled with an activation probability
$p_{uv}$ (which can be learned from the observed data~\cite{goyal2010learning})
and a distance $d_{uv}$. If we denote by $d_v(I)$ the shortest distance from any user
of set $I$ to user $v$ with respect to edge distances $d_{uv}$, and denote by
$p^{a}(G_i, v)$ the sum of edge activation probabilities $p_{uv}$ taken over
all active users $u$ in $G_i$ such that $d_v({u}) = d_v(I)$, then

\vsbb
{\small
\begin{align*}
	\mathbb{P}^{out}_{uv}(G_i, op) = \begin{cases}
	    0 & \text{if } d_v(\{u\}) > d_v(I),\\
	    1 & \text{if } G_i[u] = op \wedge G_i[v] = op,\\
	    \frac{\max(0, p_{uv} - \epsilon)}{p^{a}(G_i, v)} & \text{if } G_i[u] = op \wedge G_i[v] = 0,\\
	    \epsilon & \text{otherwise}.
	\end{cases}
\end{align*}
}
\vsb

In \begin{changed}the original model of\end{changed}~\cite{carnes2007maximizing}, $\epsilon = 0$, that is, neutral users
cannot infect others, and active users do not drop their opinions or spread
competing opinions. If, however, we compare network states with respect to
the original model, then many network states derived from real-world data would be
at distance $+\infty$ from each other, either due to the lack of knowledge
about the network (an edge has not been observed or a user's opinion has
been \begin{changed}misclassified\end{changed}) or due to an imperfect fit of the model and the data.
Instead of just declaring two network states as qualitatively unreachable,
we always want to quantify the distance between them, and, thus, assign
some negligible probabilities $\epsilon$ to the events that the model posits
as impossible.

\vspace{0.02in}
\emph{Linear Threshold Model:} A version of Linear
Threshold Model supporting opinion competition has been proposed by Borodin
et al.~\cite{borodin2010threshold}. In this model, each edge $\tuple{u,v}$
is weighted with $\omega_{uv}$ reflecting the amount of influence $u$ has
over $v$;
and each user $u$ has an in-advance chosen constant threshold $\theta_u$.
If we denote by $N^{in}(G_i, v)$ the set of in-neighbors of $v$ active in
$G_i$, and by $\Omega^{in}$ the sum of $\omega_{xv}$ over all $x \in N^{in}(G_i, v)$,
then

\vsc\vsa
{\small
\begin{align*}
	\mathbb{P}^{out}_{uv}(G_i, op) = \begin{cases}
	    0 & {\footnotesize \text{if } u \notin N^{in}(G_i, v),}\\
	    1 & {\footnotesize \text{if } G_i[u] = op \wedge G_i[v] = op,}\\
	    \frac{(1-\epsilon)\omega_{uv}}{\Omega^{in}} &
            {\footnotesize
                \text{if } G_i[u] = op \wedge G_i[v] = 0 \wedge \Omega^{in} \ge \theta_v,
            }\\
	    \epsilon & {\footnotesize otherwise}.
	\end{cases}
\end{align*}
}
\vsb

Having addressed challenges (i)-(iii), we are now ready to formally define \emdsent.

In general, the network's structure might have changed between the times
corresponding to the two network states under comparison~\cite{myers2014bursty},
but defining the ground distance for each pair of users over a different network
would incur an unacceptably high time complexity of the resulting distance
measure.
Thus, for time-ordered network states $G_1$ and $G_2$, one can define the
ground distance based on the network structure corresponding to the earlier network
state $G_1$.
However, to make \emdsent applicable to time-unordered network states
as well, we define \emdsent based on both $D(G_1, op)$ and $D(G_2, op)$ as follows.

\vsc
{\small
\begin{align}
\label{eq:emdsent-def}
    &\memdsent(G_1, G_2) = \frac{1}{2} \times \large[ \\
    &\memd(G_1^+, G_2^+, D(G_1, +)) + \memd(G_1^-, G_2^-, D(G_1, -)) + \nonumber \\
    &\memd(G_2^+, G_1^+, D(G_2, +)) + \memd(G_2^-, G_1^-, D(G_2, -)) \large], \nonumber
\end{align}
}
\vsc

\noindent where users having opinion \sentneg are considered neutral in $G_i^+$, users
having opinion \sentpos are neutral in $G_i^-$, and \emd is the original
Earth Mover's Distance~\cite{emd-rubner} that will further be replaced with
its generalization \emdnew designed in the next section. Since \emdsent is a linear
combination of several instances of \emd, then \emdsent preserves most mathematical
properties, and, in particular, metricity of the chosen \emd.

\vsb
\section{Generalized Earth Mover's\\ Distance}
The original \emd~\cite{emd-rubner} is limited in that it cannot adequately
compare histograms having different total mass---it ignores the mass
mismatch, assigning a small distance value to a pair of a light and a heavy
histograms. However, if we think about two
histograms corresponding to the states of a social network, one with a few
and another one with many active users, then we expect the distance between
such histograms to be large. In real-world data, even consecutive network states
observe a widely varying number of active users, making the
challenge of comparing histograms with total mass mismatch well pronounced.

There are several extensions of \emd that address the above mentioned limitation.
One of them, \emdpelwer~\cite{emd-pelwer}, augments \emd with an additive mass
mismatch penalty as follows
%
{\small
\begin{align*}
    &\memdpelwer(P, Q, D) =
		\ \memd(P, Q, D) \cdot
		\min{\big\{
			\sum{P_i}, \sum{Q_j}
		\big\}} + \nonumber \\
		&\phantom{xxx}+\ \alpha \cdot \max{\{D_{ij}\}} \cdot \big|
		\sum{P_i} - \sum{Q_j}
	\big|,
\end{align*}
}
where \emd is the original Earth Mover's Distance, and $\alpha$ is a constant
parameter. \begin{changed}The second summand represents the mass mismatch penalty that 
depends only\end{changed}
on the magnitude of the mass mismatch and \begin{changed}the maximum ground distance,
thereby, being unable to capture the fine details of the network's structure $D$ can
depend upon\end{changed}. This is, however,
inadequate for the comparison of the states of a social network, because the network's
behavior depends not only on the number of new activations, but also on where these
newly activated users are located in the network.

Another \emd version, namely, \emdbanks~\cite{emd-banks}, extends each histogram with an
extra bin (\emph{``the bank bin''}) whose value is chosen so that the total masses of the histograms become
equal.
\begin{fullver}
An example of such an extension is shown in Fig.~\ref{fig:emd-banks}.
\begin{figure}[h!]
    \centering
	\includegraphics[width=3in]{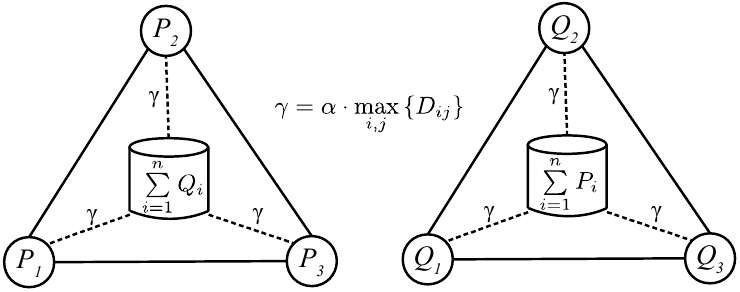}
	\caption{
        Histograms $P$ and $Q$ defined over the same network are extended with bank bins, whose capacities are chosen, so that the total masses of the
        extended histograms $\pex$ and $\qex$ are equal. The ground distances
        $\dex_{bank,i} = \dex_{i, bank} = \gamma$ from and to the bank bin are
        uniformly defined based on the largest ground distance between the
        initially present bins.
	}
	\label{fig:emd-banks}
\end{figure}
Formally, \emdbanks is defined as follows.
\begin{align*}
    & P = [P_1, \dots, P_n],\ Q = [Q_1, \dots, Q_n],\\
    & P_{bank} = \sum\limits_{j=1}^{n}{Q_j}, \hspace{0.1in}
        \widetilde{P} = \left[ P, P_{n+1} = P_{bank} \right], \notag \\
    & Q_{bank} = \sum\limits_{i=1}^{n}{P_i}, \hspace{0.1in}
		\widetilde{Q} = \left[ Q, Q_{n+1} = Q_{bank} \right], \\
	\dex = &\left[ \begin{array}{c|c}
	    D_{n \times n} &
	    \begin{array}{c} |\\ \alpha \max\limits_{i,j}{\{D_{ij}\}}\\ | \end{array} \\
	    \hline
	    \mbox{---}~\alpha \max\limits_{i,j}{\{D_{ij}\}}~\mbox{---} & 0
    \end{array} \right],
\end{align*}
\begin{align*}
	\memdbanks&(P, Q) =
	\memd(\widetilde{P}, \widetilde{Q}, \dex) \cdot
		\left(
			\sum\limits_{i=1}^{n}{P_i} +
			\sum\limits_{j=1}^{n}{Q_j}
		\right).
\end{align*}
\end{fullver}
However, as we establish in Theorem~\ref{thm:emd-banks-pelwer-equiv}, \emdbanks is equivalent
to \emdpelwer and, hence, is also unsuitable for the comparison of social network states.
\vsa
\begin{theorem}
    If ground distance $D \in \mathbb{R}^{n \times n}$ and parameter $\alpha \in \mathbb{R}^+$
    are chosen such that both \emdbanks and \emdpelwer are metric, that is, $D$ is metric
    and $\alpha \geq 0.5$~\cite{emd-banks, emd-pelwer}, then
    $\forall P, Q \in \mathbb{R}^n: \memdbanks(P, Q, D) = \memdpelwer(P, Q, D)$.
\label{thm:emd-banks-pelwer-equiv}
\end{theorem}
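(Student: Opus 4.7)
My approach will be to show that the optimal transport plan for $\memd(\widetilde{P},\widetilde{Q},\widetilde{D})$ has a very rigid structure that forces $\memdbanks(P,Q,D)$ into exactly the closed form defining $\memdpelwer(P,Q,D)$. I begin by parametrizing any feasible plan $\{f_{ij}\}_{(n+1)\times(n+1)}$ by four aggregate flows: the bin-to-bin mass $S=\sum_{i,j\le n}f_{ij}$, the bin-to-bank mass $A=\sum_{i\le n}f_{i,n+1}$, the bank-to-bin mass $B=\sum_{j\le n}f_{n+1,j}$, and the bank-to-bank mass $C=f_{n+1,n+1}$. Because $\widetilde{P}$ and $\widetilde{Q}$ have matched total mass $\sum_i P_i+\sum_j Q_j$ and the bank capacities are $P_{n+1}=\sum_j Q_j$ and $Q_{n+1}=\sum_i P_i$, the supply and demand equalities force $A+C=\sum_i P_i$ and $B+C=\sum_j Q_j$, so $B-A=\sum_j Q_j-\sum_i P_i$. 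Assume without loss of generality $\sum_i P_i\le\sum_j Q_j$, which gives $B\ge A\ge 0$; the opposite case is symmetric in $P$ and $Q$.

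Using the extended ground distance, the transport cost rewrites as $\sum_{i,j\le n}f_{ij}D_{ij}+\alpha\max_{i,j}D_{ij}\cdot(A+B)+0\cdot C$, which by the identity above becomes $\sum f_{ij}D_{ij}+\alpha\max D\cdot\bigl(2A+\sum_j Q_j-\sum_i P_i\bigr)$. The pivotal step is a local-swap argument driving $A$ to zero: whenever $A>0$, the inequality $B\ge A$ yields $B>0$, so there exist indices $i_0,j_0\le n$ with $f_{i_0,n+1}>0$ and $f_{n+1,j_0}>0$. Re-routing $\epsilon$ units of mass from these two bank-using flows to the direct flow $P_{i_0}\to Q_{j_0}$ and the bank-to-bank flow $P_{n+1}\to Q_{n+1}$ is flow-conserving on every row and column, keeps all entries non-negative for $\epsilon\le\min(f_{i_0,n+1},f_{n+1,j_0})$, and changes the cost by $(D_{i_0 j_0}-2\alpha\max D)\,\epsilon\le(1-2\alpha)\max D\cdot\epsilon\le 0$, the last inequality holding precisely because $\alpha\ge 0.5$. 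Iterating, any optimal plan can be taken to satisfy $A=0$, whence $S=\sum_i P_i$, $B=\sum_j Q_j-\sum_i P_i$, and $C=\sum_i P_i$.

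With $A=0$, the residual sub-plan $\{f_{ij}\}_{i,j\le n}$ satisfies $\sum_j f_{ij}=P_i$ and $\sum_i f_{ij}\le Q_j$, which are exactly the constraints of the transportation problem underlying $\memd(P,Q,D)$ in the regime $\sum P_i\le\sum Q_j$. Minimizing the bin-to-bin component therefore contributes $\memd(P,Q,D)\cdot\sum_i P_i$, and the forced bank flows add $\alpha\max D\cdot(\sum_j Q_j-\sum_i P_i)$. Because $\widetilde{P}$ and $\widetilde{Q}$ have equal total mass, the EMD normalization in $\memd(\widetilde{P},\widetilde{Q},\widetilde{D})$ and the subsequent multiplication by $\sum_i P_i+\sum_j Q_j$ in the definition of \emdbanks cancel, so $\memdbanks(P,Q,D)$ equals the minimum transport cost itself, namely $\memd(P,Q,D)\cdot\min\{\sum P_i,\sum Q_j\}+\alpha\max D\cdot\lvert\sum P_i-\sum Q_j\rvert$, which is exactly $\memdpelwer(P,Q,D)$.

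The step I expect to require the most care is the swap argument. The aggregate conditions $A,B>0$ alone do not produce the specific indices $i_0,j_0$ the swap needs; one has to note that each aggregate is a sum of non-negative terms, so at least one strictly positive summand exists in each. Once those indices are in hand the swap only \emph{increases} $f_{i_0 j_0}$ and $f_{n+1,n+1}$, neither of which is subject to an upper bound, while strictly decreasing two positive entries, so feasibility is immediate. The whole argument then pivots on the single numerical inequality $D_{i_0 j_0}\le\max D\le 2\alpha\max D$, which is where the hypothesis $\alpha\ge 0.5$ enters; metricity of $D$ is inherited but not otherwise used here.
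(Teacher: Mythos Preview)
Your proof is correct and follows essentially the same route as the paper's: both arguments hinge on the observation that, when $\alpha\ge 0.5$, rerouting a unit of mass through the bank costs $2\alpha\max D\ge D_{ij}$, so an optimal plan can be taken with no bin-to-bank flow from the lighter histogram, after which the cost decomposes into the \emdpelwer expression. Your swap-based presentation is a cleaner formalization of the paper's informal ``two ways to move a unit of mass'' reasoning, but the underlying idea and structure are the same.
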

\begin{confver}
The proof of Theorem~\ref{thm:emd-banks-pelwer-equiv} is provided in the extended
version of this paper~\cite{snd-full}.
\end{confver}
\begin{fullver}
\begin{proof}
    W.l.o.g., let us assume that $\sum{P_i} \leq \sum{Q_j}$.
    The proof will use the following notation:
	$$
		\Delta = \Delta(P,Q) = \left| \sum\limits_{i=1}^{n}{P_i} - \sum\limits_{j=1}^{n}{Q_j} \right|, \hspace{0.1in}
		\gamma = \alpha \max\limits_{i,j}{\{D_{ij}\}}.
	$$
	Hence, \emdpelwer is defined as
	\begin{align*}
	    \memdpelwer(P, Q) =
    	    \memd(P, Q) \min{\left\{
    			\sum\limits_{i=1}^{n}{P_i}, \sum\limits_{j=1}^{n}{Q_j}
    		\right\}} +
    		\gamma \Delta.
    \end{align*}
    The goal of the proof is to show that \emdbanks has exactly the same
    expression as \emdpelwer, as long as they are metric.
    Consider how a
    unit of mass can be transported between two histograms (see Fig.~\ref{fig:emd-eq-proof}).
\begin{figure}[h!]
    \centering
	\includegraphics[width=3in]{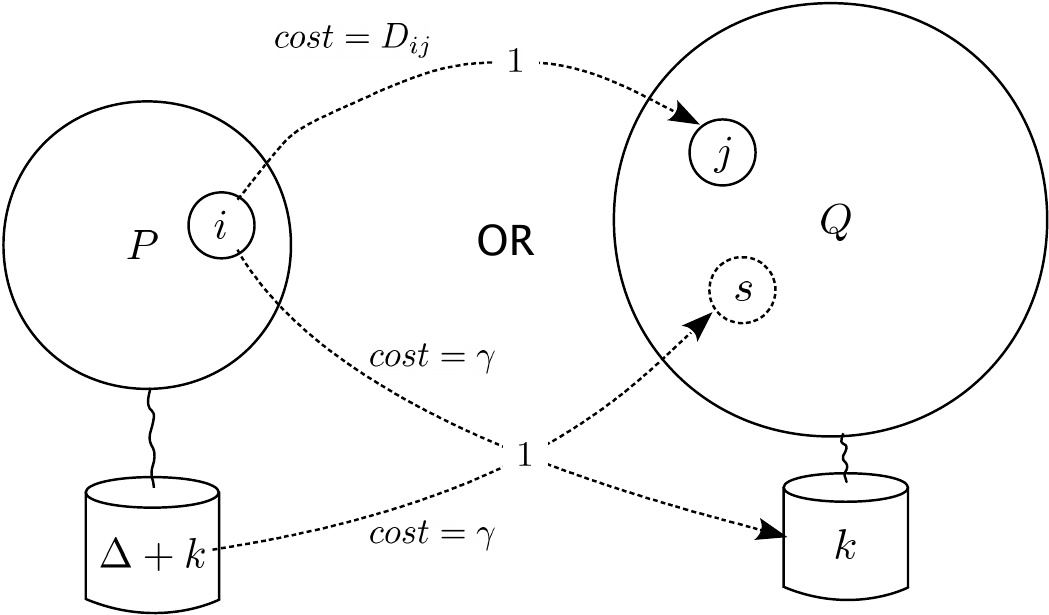}
	\caption{
	    Two qualitatively different ways to transport a unit of mass from extended
	    histogram $\pex = [P, k + \Delta]$ to extended histogram $\qex = [Q, k]$. (Dashed arrows represent the
	    flow of mass.) The bank bin is attached to every node of each histogram.
	    $k = \sum{P_i}$, so that the total masses of two histograms are equal.
	}
	\label{fig:emd-eq-proof}
\end{figure}
There are two qualitatively
different alternatives for moving a unit of mass from regular (non-bank) bin $i$ of
histogram $\pex$: a unit of mass can be moved either to a regular bin $j$ or
the bank bin of $\qex$.

In the first case, the total cost of transportation of a unit of mass is
exactly the ground distance $\dex_{ij} = D_{ij}$ between regular bins $i$
and $j$.

In the second case, the immediate cost of transportation to the
bank bin is $\dex_{i,bank} = \gamma$. However, because we have routed mass
from a regular bin
to the bank bin, there exists a regular bin $s$ in $\qex$ having ``mass deficit'' that has to be fulfilled from the bank bin of $\pex$.
Thus, if we move a unit of mass from a regular bin of $\pex$ to the bank bin of $\qex$, there
is an additional incurred cost $\gamma$ of moving an additional unit of mass
from the bank bin of $\pex$ to some regular bin of $\qex$. Hence, the total
cost of transportation of a unit of mass in the second case is
$$
    \gamma + \gamma = 2 \alpha \max\limits_{i,j}{D_{ij}} \geq (\text{since } \alpha \geq 0.5) \geq \max\limits_{i,j}{D_{ij}}.
$$

Thus, from the point of view of optimal mass transportation, it may never be
preferable to move a unit of mass from a regular bin to the bank bin
if there is an option to transport mass from a regular bin to a regular bin.
Consequently, an optimal solution to the \emdbanks's transportation problem can be
decomposed as follows.
\begin{align*}
\scriptsize
	&\memdbanks(P,Q,D)
        = \memd(\pex, \qex, \dex) \left(
    	    \sum\limits_{i=1}^{n}{P_i} +
    	    \sum\limits_{j=1}^{n}{Q_j}
        \right)\\
    & = \min\limits_{\{f_{ij}\}}{
            \sum\limits_{i,j=1}^{n+1}{f_{ij} \dex_{ij}}
        } = (\text{let } n + 1 = b)\\[0.1in]
    & = \min\limits_{\{f_{ij}\}}{}
            {\tiny
            \Big[
                \underbrace{
                    \sum\limits_{i,j=1}^{n}{ f_{ij} \dex_{ij} }
                }_{
                    \substack{\text{regular bins}\\\text{to regular bins}}
                }
                +
        		\underbrace{
        		    \sum\limits_{i=1}^{n}{ f_{ib} \dex_{ib} }
        		}_{
        		    \substack{\text{regular bins}\\\text{to bank bin}}
        		}
        		+
        		\underbrace{
    		        \sum\limits_{j=1}^{n}{ f_{bj} \dex_{bj} }
                }_{
                    \substack{\text{bank bin to}\\\text{regular bins}}
                }
                +
                \underbrace{
                    f_{bb} \dex_{bb}
                }_{
                    \substack{\text{bank bin}\\\text{to bank bin}}
                }
            \Big]
            }\\
	& = \min\limits_{\{f_{ij}\}}{
	    \Big[
    		\sum\limits_{i,j=1}^{n}{ f_{ij} D_{ij} } +
    		\gamma
    		\underbrace{
    		    \sum\limits_{j=1}^{n}{ f_{bj} }
    		}_{\Delta}
	    \Big]
    }
    = \min\limits_{\{f_{ij}\}}{
        \Big[
    		\sum\limits_{i,j=1}^{n}{ f_{ij} D_{ij} } +
    		\gamma \Delta
        \Big]
    }\\
    & = \min\limits_{\{f_{ij}\}}{ \Big[
		    \sum\limits_{i,j=1}^{n}{ f_{ij} D_{ij} }
	    \Big]} + \gamma \Delta
	  = \memdpelwer(P,Q,D).
\end{align*}
An additional useful observation is that a particular value of $k$ does
not matter for \emdbanks, since for every optimal solution of its underlying transportation problem, any amount of mass to the excess of $\Delta$ in the bank bin of the lighter
histogram will be transported at zero-cost $\dex_{bank,bank}$ to the bank bin of the heavier histogram.
This observation is formalized in the following corollary.
\end{proof}
\begin{corollary}
    For histograms $P=[P_1,\dots,P_n]$ and $Q=[Q_1,\dots,Q_n]$, and
    ground distance $D$, if $\sum{P_i} = \sum{Q_j}$ and $D$ is metric,
    then for all $k \geq 0 \in \mathbb{R}^+$, the following holds.
	\begin{align*}
		\memd\left([P, k], [Q, k], \left[\begin{array}{c|c}
            D & \begin{array}{c}|\\\omega\\|\end{array}\\ \hline
            \mbox{---}~\omega~\mbox{---} & 0
		\end{array}\right]\right) = \memd(P, Q, D),
	\end{align*}
	where $[X, k]$ is histogram $X$ extended with a single bank bin with capacity
	$k$ and a uniformly defined ground distance
	$\omega \geq \frac{1}{2} \max\limits_{i,j}{D_{i, j}}$ to/from the regular
	bins of $X$. In other words, if two histograms have equal total masses, we
	can increase their total masses by an arbitrary non-negative $k$ without
	affecting \emd between the histograms.
	\label{thm:bank-overflow}
\end{corollary}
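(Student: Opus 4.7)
The plan is to reduce the corollary to a statement about the optimal transportation cost: I will show that the minimum-cost plan for $([P,k], [Q,k], \dex)$ has the same cost as the minimum-cost plan for $(P, Q, D)$. Since both extended histograms have common total mass $M + k$ (where $M = \sum_i P_i = \sum_j Q_j$), the normalizer $\sum \widehat{f}_{ij}$ in the definition of \emd applies symmetrically on both sides, so equality of optimal transportation costs propagates to equality of \emd values.

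First, I would exhibit a candidate optimal plan for the extended problem. Let $\widehat{f}$ denote an optimal plan for $(P, Q, D)$ and let $b$ denote the bank-bin index. Construct $\tilde f$ by setting $\tilde f_{ij} = \widehat{f}_{ij}$ on regular bins, $\tilde f_{bb} = k$, and $\tilde f_{ib} = \tilde f_{bj} = 0$ for every cross-bank entry. Row and column sums then match the extended supplies and demands, the total flow is $M + k = \min\{\sum_i \pex_i, \sum_j \qex_j\}$, and the plan is therefore feasible. Its total cost equals $\sum_{i,j} D_{ij} \widehat{f}_{ij}$, which is exactly the original optimum $C_{\mathrm{orig}}$.

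Next, I would prove that no feasible plan can do better. For any feasible $f$ in the extended problem, mass balance at the two bank bins forces $\sum_i f_{ib} = \sum_j f_{bj} = k$; letting $t := k - f_{bb} \geq 0$, both the regular-to-bank and bank-to-regular totals equal $t$. The cost of $f$ then decomposes as $C_{\mathrm{reg}}(P - p^{\mathrm{bank}}, Q - q^{\mathrm{bank}}, D) + 2 \omega t$, where $p^{\mathrm{bank}}_i = f_{ib}$, $q^{\mathrm{bank}}_j = f_{bj}$, and $C_{\mathrm{reg}}$ denotes the cost of the regular-to-regular subplan. The core inequality is then: given an optimal regular subplan for the residual balanced histograms $P - p^{\mathrm{bank}}$ and $Q - q^{\mathrm{bank}}$ (each of total mass $M - t$), one can augment it with a direct regular-to-regular transport of $p^{\mathrm{bank}}$ onto $q^{\mathrm{bank}}$ to obtain a feasible plan for the original $(P, Q, D)$ problem whose cost exceeds $C_{\mathrm{reg}}$ by at most $t \cdot \max_{i,j} D_{ij}$. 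Hence $C_{\mathrm{orig}} \leq C_{\mathrm{reg}} + t \max D_{ij} \leq C_{\mathrm{reg}} + 2 \omega t$, using the hypothesis $\omega \geq \frac{1}{2}\max D_{ij}$. Thus every feasible extended plan has cost at least $C_{\mathrm{orig}}$, attained at $t = 0$ by the candidate plan, so the extended optimum equals $C_{\mathrm{orig}}$ and the claimed \emd identity follows.

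The main obstacle will be the augmentation step in the optimality argument. I need to justify that a direct regular-to-regular transport of $p^{\mathrm{bank}}$ onto $q^{\mathrm{bank}}$ can always be appended to the residual subplan without violating capacities in the original $(P, Q, D)$ problem. Construction helps here: since the residual subplan uses exactly $P_i - p^{\mathrm{bank}}_i$ supply at regular bin $i$ and $Q_j - q^{\mathrm{bank}}_j$ demand at regular bin $j$, the augmentation precisely consumes the leftover capacities, and because the total augmentation mass is $t$, each unit travels at cost at most $\max D_{ij}$. The only delicate point is to exhibit some legitimate joint coupling of the marginals $p^{\mathrm{bank}}$ and $q^{\mathrm{bank}}$, which is routine once one observes that both marginals have identical total mass $t$.
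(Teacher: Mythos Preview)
Your argument that the optimal unnormalized transportation cost for $([P,k],[Q,k],\dex)$ equals that for $(P,Q,D)$ is correct and is essentially the paper's own justification: the paper simply observes, at the end of the proof of Theorem~\ref{thm:emd-banks-pelwer-equiv}, that any bank mass in excess of the imbalance is routed bank-to-bank at zero cost, while you make this rigorous via the decomposition $C_{\mathrm{reg}}+2\omega t$ and the augmentation bound $C_{\mathrm{orig}}\le C_{\mathrm{reg}}+t\max_{i,j} D_{ij}$.

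The gap is in your opening paragraph. You assert that equality of optimal transportation costs ``propagates to equality of \emd values'' because the normalizer ``applies symmetrically on both sides,'' but the two sides of the corollary carry \emph{different} normalizers: by~(\ref{eq:emd}), $\memd(P,Q,D)=C_{\mathrm{orig}}/M$ whereas $\memd([P,k],[Q,k],\dex)=C_{\mathrm{orig}}/(M+k)$, and these disagree whenever $k>0$ and $C_{\mathrm{orig}}>0$. For instance, with $P=[1,0]$, $Q=[0,1]$, $D_{12}=D_{21}=1$, $k=1$, $\omega=1$, one obtains $\memd(P,Q,D)=1$ but $\memd([P,1],[Q,1],\dex)=1/2$. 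What your argument (and the paper's observation) actually establishes is equality of the unnormalized optimal costs; the passage from there to equality of the normalized $\memd$ of~(\ref{eq:emd}) does not go through, and your first paragraph does not bridge that gap.
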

\end{fullver}

Our version of Earth Mover's Distance, \emdnew, extends the idea of \emdbanks by
augmenting the histograms to even their masses. However, unlike \emdbanks, \emdnew extends
the histograms with multiple \emph{local bank bins} and distributes the total mass mismatch
over all of them. We, thereby, relate the mass mismatch penalty to the network
structure, while achieving equality of the total masses of the two histograms under comparison.

With respect to the number and location of the bank bins, one extreme option is to
attach one local bank bin to each initially present bin. Furthermore, in order to
model a non-constant transportation cost to/from a bank bin, multiple local bank
bins per initially present bin can be used, each with its individual ground distance.
This bank allocation strategy can incur a high computational cost, since attaching
even a single bank bin to each initially present bin doubles the size of
the histogram.

A compromise between the two extremes of having a single bank (as in \emdbanks) and
one bank per bin is to use one or more local banks per a \emph{group of bins}. Such bin
groups can be defined based on the
structural proximity of the corresponding users in the underlying network (see
 Fig.~\ref{fig:emd-new-local-banks-clusters}).
\begin{figure}[h!]
    \centering
	\includegraphics[width=3in]{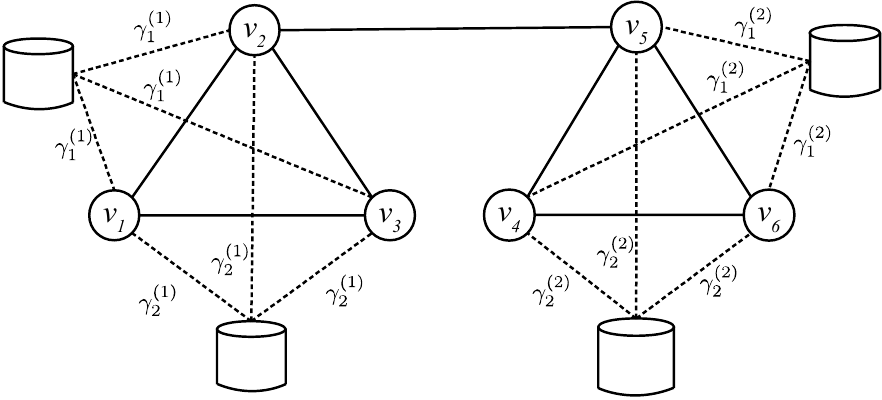}
	\caption{
		A histogram over a network extended with two banks per cluster of bins.
		$\gamma^{(i)}_j$ is the ground distance to/from the $j$'th bank bin of $i$'th bin cluster.
	}
	\label{fig:emd-new-local-banks-clusters}
	\vsb
\end{figure}

Ground distance $\gamma^{(i)}$ to/from an added bank bin should be of the same
order as the ground distances within the $i$'th cluster of bins the bank is
attached to. If $\gamma^{(i)}$ is much lower than the ground distances in
its cluster, then it can negatively affect metricity of \emdnew, the conditions
for which will be stated later. If $\gamma^{(i)}$ is much higher than the
ground distances in its cluster, it may result in an \emdbanks-like behavior,
with the global bank bin, even though spatially distributed across multiple local
banks, still playing no
role in the process of optimal mass transportation. The capacities of the added
bank bins should be determined based upon two ideas. Firstly, the capacity
of a bank bin should intuitively be proportional to the total mass of the bins
the bank is attached to, thereby, preserving the relative distribution of mass
over the network. Secondly, the capacities of all the bank bins should be such,
that the two histograms under comparison have equal total masses.
The following definition of capacity $P^{(i)}$ of a bank bin connected to the
$i$'th cluster $C_i$ of bins in the context of comparing histograms
$P=[P_1, \dots, P_n]$ and $Q=[Q_1, \dots, Q_n]$ incorporates both of the above requirements.
\vsa
\begin{align*}
	P^{(i)}_j = \begin{cases}
    	    \sum\limits_{v_k \in C_i}{P_k}
            ~\big/~
			\big(\sum\limits_{s=1}^{n}{Q_s} -
			\sum\limits_{s=1}^{n}{P_s}\big)
        & \text{if } \sum{Q_s} > \sum{P_s},\\
        \phantom{x}0, & \text{otherwise}.
    \end{cases}
\end{align*}

\vspace{-0.01in}
To better understand the advantage of \emdnew over the existing versions
of \emd, consider the example in Fig.~\ref{fig:emdnew-vs-other-emds}.
\begin{figure}
    \centering
	\includegraphics[width=3.3in]{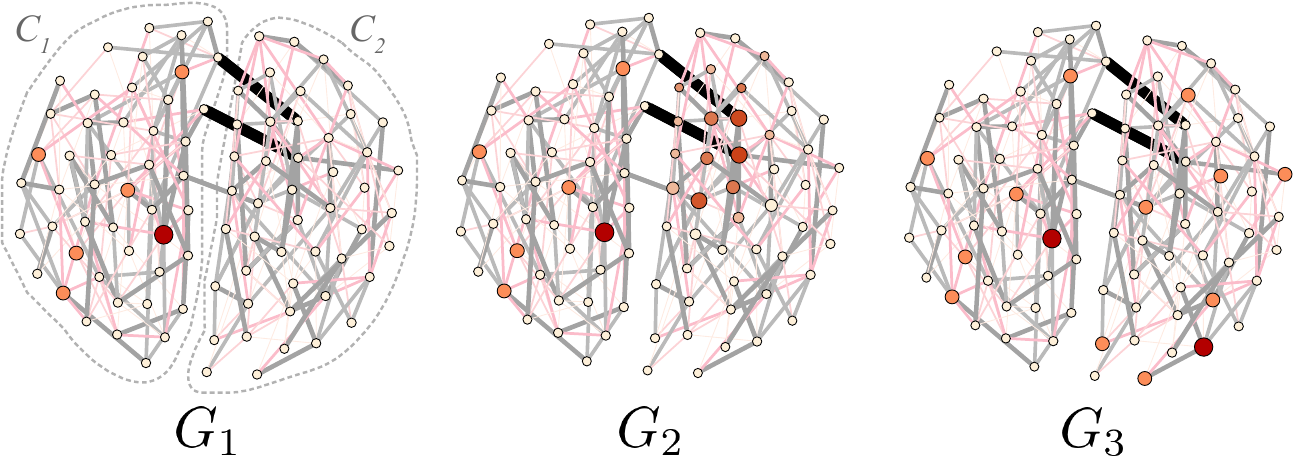}
	\vsc\vsa
	\caption{Three histograms over the same network.}
	\label{fig:emdnew-vs-other-emds}
	\vsa\vsa\vsc
\end{figure}
There are three histograms over a network with two pronounced clusters $C_1$ and
$C_2$ connected by three bridge edges. The distribution of mass over cluster $C_1$ is identical
in all three histograms, while cluster $C_2$ is empty in $G_1$ and has some differently
distributed mass in $G_2$ and $G_3$. In $G_2$ the extra mass has been ``propagated''
from cluster $C_1$ to cluster $C_2$ through the bridges, while in $G_3$ the same amount of extra
mass has been randomly distributed over cluster $C_2$. Thus, if we assume that $G_2$ and $G_3$
have ``evolved'' from $G_1$ through a process of mass propagation, then $G_2$ should
intuitively be closer to $G_1$ than $G_3$. However, only \emdnew captures this
intuition as $\memdnew(G_1, G_2) < \memdnew(G_1, G_3)$, while for \emdbanks and \emdpelwer,
$G_2$ and $G_3$ are equidistant from $G_1$, and for \emd, both $G_2$ and $G_3$ are identical to
$G_1$.

Next, we formally define \emdnew. Suppose we are given two histograms
$P = [P_1, \dots, P_n]$ and $Q = [Q_1, \dots, Q_n]$ defined over a network
$G = \tuple{V, E}$ with cross-bin ground distance $D_{n \times n}$. The ground
distance is application-specific and, in our case, is provided by \emdsent.
Bins of both histograms are clustered into groups $\{C_i\}$, $i = 1, \dots, N_c$
based on the proximity of the corresponding users in network $G$. Cluster $C_i$
contains $NC_i$ users/bins. Each bin cluster gets $N_b$ banks attached to all
its bins, so the total number of bins in an extended histogram is $N = n + N_c \times N_b$.
Ground distances from/to the bins of cluster $C_i$ to/from its banks are defined as
$\gamma^{(i)} = [ \gamma_1^{(i)}, \dots, \gamma_{N_b}^{(i)} ]^T$,
and, jointly for all clusters,
$\gamma = [ {(\gamma^{(1)})}^T, \dots, {(\gamma^{(N_c)})}^T ]^T$.
Since bank bins belonging to different clusters are not interconnected, in
order to define ground distances between bank bins \begin{changed}we
define distances between clusters $C_i$ in terms of $D_{ij}$ as
\end{changed} $d_{ij} = \min\limits_{v_p \in C_i, v_q \in C_j}{\{D_{pq}\}}$.
\begin{changed}Then, \emdnew is defined as follows.\end{changed}
{
\begin{align*}
	\pex &= \Big[
		\underbrace{ P_1, \dots, P_n}_{\text{original $P$}},
		\underbrace{ P^{(1)}_1, \dots, P^{(1)}_{N_b} }_{ \text{cluster $C_{1}$ banks} },
		\dots,
		\underbrace{ P^{(N_c)}_1, \dots, P^{(N_c)}_{N_b} }_{ \text{cluster $C_{N_c}$ banks} }
	\Big], \notag \\
	\qex &= \big[
		\underbrace{ Q_1, \dots, Q_n}_{\text{original $Q$}},
		\underbrace{ Q^{(1)}_1, \dots, Q^{(1)}_{N_b} }_{ \text{cluster $C_{1}$ banks} },
		\dots,
		\underbrace{ Q^{(N_c)}_1, \dots, Q^{(N_c)}_{N_b} }_{ \text{cluster $C_{N_c}$ banks} }
	\big], \notag \\[0.1cm] 
	S &= \left[\begin{array}{c}
			d_{1,*}    \otimes \onebb_{NC_1 \times 1} ~\big|~
		   		        \cdots                        ~\big|~
			d_{N_c, *} \otimes \onebb_{NC_{N_c} \times 1}
		\end{array}\right]^T, \notag \\[0.2cm] 
	\dex &= \left[\begin{array}{c|c}
		D_{n \times n}
		&
		\onebb_{n \times 1} \otimes \gamma^T + S^T \otimes \onebb_{1 \times N_b} \\
		\hline
		\substack{
			\onebb_{1 \times n} \otimes \gamma +
			S \otimes \onebb_{N_b \times 1}
		}
		&
		\substack{
			\gamma \otimes \onebb_{1 \times (N_b \cdot N_c)} +
			\gamma^T \otimes \onebb_{(N_b \cdot N_c) \times 1} -\\
			-\ 2 \cdot \mathop{diag}{(\gamma)} + d \otimes \onebb_{N_b \times N_b}
		}
	\end{array}\right], \notag 
\end{align*}
\vsb
\begin{align}
	\memdnew(P,Q) = \memd(\pex, \qex, \dex) \max{
	    \left\{
	        \sum{P_i}, \sum{Q_j}
	    \right\}
    },
\label{eq:emdnew-def}
\end{align}
}
where $\onebb_{a \times b}$ is an $a$-by-$b$ matrix of all ones; $d_{*,j}$ and $d_{i,*}$
are the $j$'th column and the $i$'th row of matrix $d$, respectively; $diag(v)$ is a diagonal matrix with the elements of vector $v$ on its main diagonal; and $\otimes$ is the Kronecker product.


Metricity of \emdnew, which can be exploited to improve practical
performance of distance-based search in applications~\cite{clarkson2006nearest}, is established in the following Theorem~\ref{thm:emdnew-metricity}.
\begin{confver}The proof is provided in the extended version of this
paper~\cite{snd-full}.\end{confver}

\ifconfver
    \vsb
\fi
\begin{theorem}
	Given an arbitrary finite set $\calh$ of histograms with bin clusters $\{C_i\}$
	and ground distance $D_{n \times n}$, if $D$ is metric and the ground distances
	$\gamma$ to/from the bank bins are such that
	$\forall i, j: \gamma^{(i)}_j \geq \frac{1}{2} \max_{v_p,v_q \in C_i}{D_{pq}}$,
	then \emdnew defined with $\{C_i\}$ and $\gamma$ is metric on $\calh \times \calh$.
\label{thm:emdnew-metricity}
\end{theorem}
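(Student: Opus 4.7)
The plan is to reduce the claim to metricity of the classical $\memd$ (Theorem~\ref{thm:emd-metric}) by showing that the extended ground distance $\dex$ defined immediately before~(\ref{eq:emdnew-def}) is itself a metric on the joined index set of regular and bank bins, and that the extensions $\pex,\qex$ always have equal total mass by construction. Metricity of the outer $\memdnew$ would then follow from $\memd$ on the extensions, modulo reconciling the pair-dependent extension scheme used in~(\ref{eq:emdnew-def}).

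Step one is to verify that $\dex$ is a metric. Non-negativity, symmetry, and $\dex(x,x)=0$ are read directly from the block structure of $\dex$; in particular the $-2\mathop{diag}(\gamma)$ correction in the bottom-right block is precisely there to zero out the bank-to-itself entries. The triangle inequality I would establish by case analysis on how many of the three endpoints are bank bins. The regular-regular-regular case is the assumed metricity of $D$. For a detour $v_p\to\text{bank}(i,j)\to v_q$ with $v_p,v_q\in C_i$, the round-trip cost paid is $2\gamma^{(i)}_j$, and the hypothesis $\gamma^{(i)}_j\ge\tfrac12\max_{v_p,v_q\in C_i}D_{pq}$ is exactly the bound that prevents the bank from shortcutting a within-cluster edge. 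Cross-cluster detours combine this cluster-diameter bound with the triangle inequality of $D$ and the fact that $d_{ii'}=\min_{v_p\in C_i,v_q\in C_{i'}}D_{pq}$ realizes the minimum inter-cluster $D$-distance, so that no bank detour can beat the corresponding sequence of direct $D$-hops.

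With $\dex$ metric and $\sum\pex=\sum\qex=\max\{\sum P,\sum Q\}$ by the definition of $P^{(i)}_j$, Theorem~\ref{thm:emd-metric} applied to $\memd(\pex,\qex,\dex)$, then scaled by the common total mass $\max\{\sum P,\sum Q\}$, already delivers non-negativity, symmetry, and vanishing on $P=Q$ for $\memdnew$. The triangle inequality on a triple $P,Q,R$ is the main obstacle, because both the bank capacities and the multiplicative factor in~(\ref{eq:emdnew-def}) are pair-dependent. My plan is to lift all three pair-wise comparisons to a common reference mass $M=\max\{\sum P,\sum Q,\sum R\}$, extending each histogram to total mass $M$ under the same proportional-to-cluster bank scheme, and then apply the triangle inequality for $\memd$ at level $M$ (which holds by Theorem~\ref{thm:emd-metric} and Step one). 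For the two pairs whose heavier member already has mass $M$, the common-reference cost agrees with $\memdnew$ on the nose. For the remaining pair I would invoke a multi-bank generalization of the overflow observation behind Corollary~\ref{thm:bank-overflow}: the excess bank mass $M-\max\{\sum P,\sum Q\}$ added on both sides is routed through an explicit bank-to-bank sub-plan whose cost cancels across the two sides of the triangle at level $M$, reducing the common-reference inequality to the pair-wise one.

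The delicate technical point is this last reduction. For the single global bank treated in Corollary~\ref{thm:bank-overflow}, excess bank mass is routed at zero cost and the value of $\memd$ is genuinely independent of the reference mass; with multiple local banks whose capacities are tied to the differing cluster profiles of $P$ and $Q$, the bank-to-bank cost at level $M$ is strictly positive, so the argument must exhibit the extra bank-to-bank routing on the short side of the triangle as dominated by costs already present on the long side, using the metricity of $D$ and the lower bound on $\gamma$ from Step one. I expect this overflow-style cancellation, rather than the verification of $\dex$-metricity, to be the bulk of the work in the proof.
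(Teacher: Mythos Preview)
Your verification that $\dex$ is a metric and that $\sum\pex=\sum\qex$ tracks the paper, and you have correctly isolated the pair-dependent extension as the obstruction to the triangle inequality. But your proposed resolution---pushing the overflow into the \emph{local} banks and arguing that the resulting bank-to-bank routing cost ``cancels across the two sides of the triangle''---has a real gap. The local bank capacities for $P$ in the comparison $(P,Q)$ are proportional to $P$'s cluster masses and scaled by $\sum Q-\sum P$; at your common level $M$ they are scaled instead by $M-\sum P$. So the two bank profiles for $P$ differ by a \emph{multiplicative} factor, not by an additive overflow, and $Q$ acquires nonzero banks at level $M$ that it did not carry at all in the $(P,Q)$ comparison. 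There is no evident sub-plan whose cost appears symmetrically on the two long sides, and the inequality you would need does not reduce to metricity of $\dex$ in any direct way.

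The paper avoids this entirely by a second, \emph{global} extension layered on top of the local one. It uses the finiteness of $\calh$ (which you never invoke) to fix one universal $M=\max_{X\in\calh}\sum_k X_k$, and defines $EMD'(P,Q)=\memd(P',Q',D')$ with $P'=[\pex,\,M-\sum\pex_i]$, $Q'=[\qex,\,M-\sum\qex_j]$: a \emph{single} additional bank bin appended to the already equal-mass $\pex,\qex$, at uniform distance $\tfrac12\max_{i,j}\dex_{ij}$ to all existing bins and distance $0$ to itself. Since $\sum\pex=\sum\qex$, the two appended masses are equal, so Corollary~\ref{thm:bank-overflow} applies verbatim---the overflow is routed at zero cost and $EMD'(P,Q)=\memd(\pex,\qex,\dex)$. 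Now every $P'$ has the \emph{same} total mass $M$ regardless of which pair is being compared, and Theorem~\ref{thm:emd-metric} yields metricity of $EMD'$ in one stroke; $\memdnew$ is then a fixed scalar multiple of $EMD'$. The multi-bank overflow you flag as ``the bulk of the work'' never arises, because the second extension is a single global bank with zero self-distance, exactly the setting of Corollary~\ref{thm:bank-overflow}.
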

\begin{fullver}
\begin{proof}
    Let us define constant $M = \max\limits_{X \in \calh}{\sum_{k}{X_k}}$.
	Since $\mathcal{H}$ is finite and all the distributions are assumed to
	have finite total masses, then $M < +\infty$. Next, we define an auxiliary
	distance measure $EMD'$ as follows.
	\begin{align*}
		&EMD'(P, Q, D) = \memd(P', Q', D'),\\[0.05in]
		P' = &[\pex, M - \sum{\pex_i}],\hspace{0.2in} Q' = [\qex, M - \sum{\qex_j}],\\[0.05in]
		D' = &\left[\begin{array}{c|c}%
			\dex & \begin{array}{c} |\\ \max\limits_{i,j}{\{\dex_{ij}\}} / 2\\ | \end{array}\\
			\hline
			\mbox{---}~\max\limits_{i,j}{\{\dex_{ij}\} / 2~\mbox{---}} & 0
		\end{array}\right],
	\end{align*}
	where $\pex$, $\qex$, and $\dex$ are the extended histograms, and the extended ground distance, respectively, as defined by \emdnew. From the definition of
	\emdnew~(\ref{eq:emdnew-def}), it follows
	that $\sum{\pex_i} = \sum{\qex_i}$ and, hence $M - \sum{\pex_i} = M - \sum{\qex_i} = k$.
	Thus, since $\sum{P'_i} = \sum{Q'_i} = M$, $D$ is metric, and $k \geq 0$, we can
	apply Corollary~\ref{thm:bank-overflow} to $P = \pex$, $Q = \qex$, $D = \dex$, and
	$\omega = \frac{1}{2}\max\limits_{i,j}{\dex_{ij}}$, to obtain
	\begin{align*}
		\memd'(P, &Q, D) =\ \memd(P', Q', D') =\\[0.05in]
		=\ &(\text{from Corollary~\ref{thm:bank-overflow}}) = \memd(\pex, \qex, \dex) =\\[0.05in]
			=\ &(\text{from definition of \emdnew}) =
    			\frac{
    		        \memdnew(P, Q, D)
    		    }{
    		        \max{ \left\{
	                    \sum{P_i}, \sum{Q_j}
	                \right\} }
    		    }.
	\end{align*}

	Thus, \emdnew is metric iff $\memd'$ is metric. The latter's metricity, according
	to Theorem~\ref{thm:emd-metric}, requires equality of total masses of all histograms
	and metricity of the ground distance. From the definition
	of $\memd'$, it is clear that \emph{all} histograms $P'$ and $Q'$ supplied to \emd
	by $\memd'$ always have the same total mass $M$. As to metricity of the ground distance,
	the identity of indiscernibles and symmetry straightforwardly follow from the corresponding
	properties of the original ground distance $D$ and our choice of the ground
	distances to/from the bank bins to be non-negative and symmetric. The triangle
	inequality holds for the original $D$, so we need to inspect only the new
	``triangles'' introduced into the ground distance after the addition of the
    bank bins, such as shown in Fig.~\ref{fig:emd-new-proof-triangle-general}.
\begin{figure}[h!]
    \centering
	\includegraphics[width=1.2in]{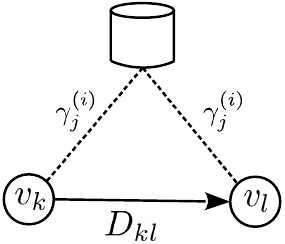}
	\caption{
        $j$'th bank bin of bin cluster $C_i$ attached to two representatives
        of $C_i$, namely, bins $v_k$ and $v_l$. Other bins of $C_i$ as well as
        other bank bins attached to it are not displayed.
        $D_{kl} \leq \max\limits_{v_p,v_q \in C}{D_{pq}}$.
	}
	\label{fig:emd-new-proof-triangle-general}
\end{figure}

	From the inequality for $\gamma^{(i)}_j$ in the theorem's statement,
    $\gamma^{(i)}_j + \gamma^{(i)}_j \geq 2 \times \frac{1}{2} \max\limits_{v_p,v_q \in C_i}{D_{pq}} \geq D_{kl}$, while $\gamma^{(i)}_j + D_{kl} \geq \gamma^{(i)}_j$
    trivially holds. Thus, the triangle inequality holds for each triangle introduced
    into the ground distance by the bank bin. When extending histogram $\pex$
    to $P'$, the same reasoning applies to the single added bank bin and ground
    distance $\dex$, and, as a result, the triangle inequality holds for $D'$
    as well. Thus, by Theorem~\ref{thm:emd-metric}, $EMD'$, and, hence, \emdnew
    is metric.
\end{proof}
\end{fullver}

\vsc
\section{Efficient Computation of SND}

\emdsent is defined~(\ref{eq:emdsent-def}) as a linear combination of several instances of \emdnew,
and, thus, computation of \emdsent involves:\\
\phantom{x} $\bullet$ Computing the ground distance $D$ based on the structure
of the underlying network $G = \tuple{V, E}$ ($|V| = n$, $|E| = m$) and the
opinions of the users in one of the network states $G_1$, $G_2$ under comparison.\\
\phantom{x} $\bullet$ Computing \emdnew, given that both the histograms
corresponding to the network states and the ground distance have been
computed.

Computation of the ground distance $D$ implies computing the shortest paths in
the network.
A direct computation of $D$ for all pairs of users using Johnson's
algorithm~\cite{johnson1977efficient} for sparse $G$ would incur time cost $\bigoh{n^2 \log{n}}$.
Computing \emdnew is algorithmically equivalent to computing \emd,
and, since the latter is formulated as a solution of a transportation
problem, it can be computed either using a general-purpose linear solver,
such as Karmakar's algorithm, or a solver that
exploits the special structure of the transportation problem, such as the
transportation simplex algorithm. The time complexity of both algorithms,
however, is supercubic in $n$. Thus, a precise computation of \emdsent
using existing techniques is prohibitively expensive at the scale of
real-world online social networks.
Furthermore, the existing approximations of \emd are either not applicable
to the comparison of histograms derived from the states of a social network,
since they drastically simplify the ground distance~\cite{tang2013earth,li2014linear},
or are effective only for \begin{changed}some graphs, such as trees,\end{changed} structurally not characteristic of
social networks~\cite{mcgregor2013sketching}.

We propose a method to compute \emdsent precisely in time linear in $n$
under the following two realistic \emph{assumptions}.

\textbf{Assumption 1:}
    The number $n_\Delta$ of users who change their opinion between
    two network states $G_1$ and $G_2$ under comparison is significantly
    smaller than the total number $n$ of users in the network. This assumption
    is reasonable, because in most applications the network states under
    comparison are not very far apart in time and, hence, $n_\Delta \ll n$.

\textbf{Assumption 2:}
    The opinion transportation costs, defined as the elements of 
    adjacency matrix $A^{ext}$ in~(\ref{eq:emdsent-extended-adjmat}), are positive
    integers bounded from above by constant $U \ll +\infty \in \mathbb{Z}^+$.
    This assumption is easy to satisfy by the appropriate choice of costs, and
    does not limit our analysis.

Since, according to the definition~(\ref{eq:emdsent-def}) of \emdsent, its computation
is equivalent to four computations of \emdnew, we will, first, focus on fast computation
of \emdnew on the inputs supplied by \emdsent.
Our method for efficient computation of \emdsent requires the following two lemmas.

\vspace{-0.18cm}
\begin{lemma}
    For any two histograms $P \in \mathbb{R}^n$ and $Q \in \mathbb{R}^m$ and
    ground distance $D \in \mathbb{R}^{n \times m}$, removal of empty bins
    from $P$ and $Q$ as well as the corresponding rows and columns from $D$
    does not affect the value of $\memdnew(P, Q, D)$.
\label{lemma:reduction1}
\end{lemma}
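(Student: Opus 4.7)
\emph{Proof plan.} The key observation is a standard LP fact about the transportation problem underlying $\memd$: a bin with zero capacity forces all of its associated flow variables to vanish. Specifically, if $P_i = 0$ then the constraint $\sum_j f_{ij} \leq P_i = 0$ together with $f_{ij} \geq 0$ pins $f_{ij} = 0$ for every $j$, and symmetrically whenever $Q_j = 0$. Hence, deleting empty bins together with the matching rows and columns of $D$ amounts to projecting the LP onto its un-pinned variables: the feasible region (modulo dropping trivially-zero coordinates), the optimal transport cost, and the total transported mass $\min\{\sum_i P_i, \sum_j Q_j\}$ in~(\ref{eq:emd}) are all preserved.

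I would then lift this observation through the definition $\memdnew(P,Q,D) = \memd(\pex,\qex,\dex) \cdot \max\{\sum_i P_i, \sum_j Q_j\}$. The scaling factor is trivially invariant because empty bins contribute nothing to either total. For the inner $\memd$ call, I would verify that each component of the extended problem is preserved by the reduction: the bank capacities $P^{(i)}_j$ depend only on the per-cluster mass totals $\sum_{v_k \in C_i} P_k$, which are unaffected by removing empty bins; the ground distances $\gamma^{(i)}_j$ are intrinsic to the bank architecture; and the inter-cluster distances $d_{ij} = \min_{v_p \in C_i, v_q \in C_j} D_{pq}$ are features of the network topology. Thus the extended LP after reduction is the original extended LP with the empty-bin variables projected out, and the first paragraph applies verbatim.

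The one point that needs care is the convention on the cluster partition and the bank-to-bank ground distances: both are to be regarded as fixed features of the underlying network rather than being recomputed from histogram occupancy, so that the extension $\dex$ commutes with the empty-bin projection on $D$. With this convention in place, combining the two invariances immediately gives $\memdnew(P,Q,D) = \memdnew(P',Q',D')$ for the reduced inputs, which is the claim.
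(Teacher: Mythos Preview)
Your proposal is correct and follows the same core idea as the paper: empty bins carry no mass, so the LP constraints pin all their flow variables to zero, and projecting them out leaves the optimal cost unchanged. The paper's own argument is a single sentence to this effect and does not separately analyze the bank-bin extension; your second and third paragraphs are more careful than the paper in checking that the $\memdnew$ scaffolding (bank capacities, $\gamma^{(i)}_j$, inter-cluster distances $d_{ij}$) survives the reduction, and your explicit convention that the cluster partition and $\dex$ are fixed by the network rather than recomputed from occupancy is exactly the right way to make the one-line argument rigorous.
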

\vsa

The proof of Lemma~\ref{lemma:reduction1} is straightforward, since empty bins do not supply
or demand any mass in the underlying transportation problem, and, hence, do not
affect the cost of the optimal transportation plan. While Lemma~\ref{lemma:reduction1}
allows to remove redundant suppliers and consumers from the underlying transportation
problem, the following Lemma~\ref{lemma:reduction2} allows to transform the histograms,
without affecting the value of \emdnew, exposing the redundant suppliers and consumers
for removal.

\vsb
\begin{lemma}
Given two arbitrary histograms $P, Q \in \mathbb{R}^n$ and a ground distance
$D \in \mathbb{R}^{n \times n}$, if $D$ is semimetric\footnote{Under a
\emph{semimetric} we understand a metric with symmetry requirement dropped.
}, then for any $i \in [1; n]$, the following holds

\vsc
\begin{align*}
    &\memdnew(P, Q, D) = \memdnew( \\
    &\phantom{xxx}[P_1, \dots, P_{i - 1}, P_i - \min{\{P_i, Q_i\}}, P_{i+1}, \dots, P_n], \\
    &\phantom{xxx}[Q_1, \dots, Q_{i - 1}, Q_i - \min{\{P_i, Q_i\}}, Q_{i+1}, \dots, Q_n], D).
\end{align*}
\label{lemma:reduction2}
\end{lemma}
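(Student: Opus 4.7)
My plan is to exploit the fact that any semimetric $D$ satisfies $D_{ii}=0$ (identity of indiscernibles), which makes self-transportation at bin $i$ free in the underlying transportation problem. This suggests establishing the equality via two inequality steps, each through an explicit transformation of optimal transportation plans.

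First I would prove $\memdnew(P,Q,D) \le \memdnew(P',Q',D)$ by lifting any optimal plan $f'$ for the extended problem behind $\memdnew(P',Q',D)$ to a plan $f$ for the extended problem behind $\memdnew(P,Q,D)$: set $f_{ii} := f'_{ii}+m$ and copy all remaining entries. Bin $i$ of $\pex$ supplies exactly $m$ more than bin $i$ of $\pex'$ and bin $i$ of $\qex$ demands exactly $m$ more than bin $i$ of $\qex'$, so the additional $m$ units of self-flow fit within the constraints; any extra bank-bin capacity created by the extension can simply be left unused. Because $\dex_{ii}=D_{ii}=0$, this augmentation contributes zero extra cost, so taking infima gives the desired bound.

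For the reverse inequality $\memdnew(P,Q,D) \ge \memdnew(P',Q',D)$, I would start from an optimal plan $f^*$ for $\memdnew(P,Q,D)$ and apply the classical exchange argument to argue, without loss of generality, that $f^*_{ii}=m$. While $f^*_{ii}<m$, bin $i$'s demand on the $\qex$ side forces some $k \neq i$ with $f^*_{ki}>0$; either bin $i$'s supply has slack and we can shift a unit from $(k,i)$ to $(i,i)$ at cost change $-\dex_{ki} \le 0$, or we locate $j \neq i$ with $f^*_{ij}>0$ and reroute $\delta$ units through the cycle $(i,j),(k,i) \to (i,i),(k,j)$ with net cost change $\delta(\dex_{kj}-\dex_{ij}-\dex_{ki}) \le 0$ by the triangle inequality on $\dex$. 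Iterating yields an optimum with $f^*_{ii}=m$; deleting these $m$ self-flow units then produces a valid plan for the $(P',Q')$ problem with identical cost.

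The main obstacle I anticipate is the bookkeeping around bank bins: when passing from $(P,Q)$ to $(P',Q')$, the bank-bin capacity of the cluster containing bin $i$ shrinks, so the restricted plan from the $\ge$ direction needs to be shown feasible against these reduced capacities. I expect to resolve this with an auxiliary local exchange that diverts any excess bank-bin outflow back into bin $i$ of $\pex'$ (whose supply $P_i-m$ can absorb precisely the needed amount), once more leveraging $\dex_{ii}=0$ together with the triangle inequality involving the $\gamma$-distances. With this patch, the two inequalities close up and yield the claimed equality.
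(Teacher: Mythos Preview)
Your rerouting argument---drive $f^*_{ii}$ up to $m=\min\{P_i,Q_i\}$ via a cycle exchange using the triangle inequality, then strip off the zero-cost self-flow---is exactly the paper's approach; the paper packages it as a single construction rather than two inequalities, but the substance is identical.

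The bank-bin complication you flag is genuine, and in fact the paper's own proof glosses over it: it asserts that the problem obtained by subtracting $M$ from $\pex_i,\qex_i$ ``is exactly'' $\memdnew(P',Q',D)$, without observing that re-extending $(P',Q')$ produces \emph{different} bank capacities (since the cluster masses entering the bank formula shift). Your proposed patches, however, do not close this gap. In the $\le$ direction you cannot ``leave extra bank capacity unused'': because $\sum_i\pex_i=\sum_j\qex_j$ by construction of $\memdnew$, every supply and demand constraint in the underlying $\memd$ is tight, so the lifted plan (copy $f'$ and add $m$ to the $(i,i)$ entry) is infeasible at every bank bin whose capacity changed---and these changes are a redistribution across clusters, not a one-sided increase. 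In the $\ge$ direction your absorption trick fails when $P_i\le Q_i$, since then $P_i-m=0$ and bin $i$ of $\pex'$ has no residual supply to soak up diverted bank outflow. What actually works---and is how the paper implicitly uses the lemma inside the proof of Theorem~4---is to state and prove the reduction directly for the balanced transportation problem on the \emph{already-extended} data $(\pex,\qex,\dex)$ with the bank bins held fixed; your exchange argument establishes precisely that version without further repair.
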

\vsc\vsbb
\begin{changed}
\begin{proof}
    First, we will show that there is always an optimal plan $f_{ij}$
    in the problem of optimal transportation of mass from $\pex$ to $\qex$ over
    $\dex$ such that $\forall i \in [1; n] : f_{ii} = \min{\{\pex_i, \qex_i\}} = M$, and,
    then, use such a plan to argue about the value of \emdnew.
    
    Consider an arbitrary optimal transportation plan $\fhat_{ij}$, and assume
    that $\exists i \in [1; n] : \delta = M - \hat{f}_{ii} > 0$. We will now use
    $\fhat$ to construct another optimal transportation plan $\newfhat_{ij}$ such
    that $\newfhat_{ii} = M$. Initially, we put $\newfhat = \fhat$ and, then, re-route
    mass flows in $\newfhat$ to eventually achieve the desired value of $\newfhat_{ii}$.

    Since, initially, $\newfhat_{ii} < M$, the remaining at least $\delta$ units of
    mass should be distributed by $\pex_i$ and consumed by $\qex_i$ to/from
    other consumers/suppliers. Among those, let us pick the ones that supply/consume
    the least amount of mass to $\qex_i$ and from $\pex_i$, respectively:
    $\ell = \argmin_{j \neq i}{\newfhat_{ji}}$, and $r = \argmin_{j \neq i}{\newfhat_{ij}}$. W.l.o.g., let us assume that $\newfhat_{\ell i} \leq \newfhat_{ir}$ and denote
    $\Delta = \min{\{\newfhat_{\ell i}, \delta\}}$. Now, we will re-route $\Delta$
    units of mass in $\newfhat$ as follows:
    \ifconfver
        \vspace{-0.18in}
    \else
        \vspace{0.05in}
    \fi
    \begin{figure}[h!]
        \centering
    	\includegraphics[width=1.15in]{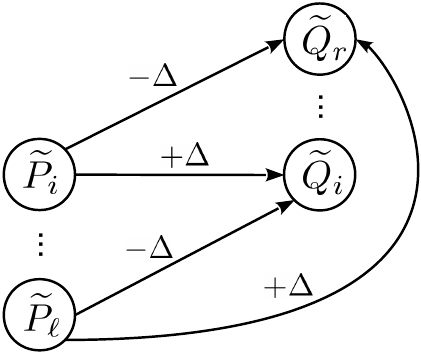}
    \vsb
    \end{figure}
    \vspace{-0.08in}
    \begin{align*}
        &\phantom{xxxx}\newfhat_{\ell i} \gets \newfhat_{\ell i} - \Delta,
        &\phantom{xxxx}\newfhat_{\ell r} \gets \newfhat_{\ell r} + \Delta, \phantom{xxxxx} \\
        &\phantom{xxxx}\newfhat_{ir} \gets \newfhat_{ir} - \Delta,
        &\phantom{xxxx}\newfhat_{ii} \gets \newfhat_{ii} + \Delta \phantom{xxxxxx}
    \end{align*}
    The updated transportation plan is legal, as the total amount of mass
    supplied or consumed by each bin has not changed. The total cost of $\newfhat$
    has been updated as follows
    \vspace{-0.03in}
    \begin{align*}
        new&cost(\newfhat) \gets cost(\newfhat) - \Delta \dex_{\ell i} - \Delta \dex_{i r}
            + \Delta \dex_{ii} + \Delta \dex_{\ell r} =\\
            =\ &\big(\text{since $D$ and, hence, $\dex$ is semimetric, $\dex_{ii} = 0$}\big) = \\
            =\ &cost(\newfhat) - \Delta(\dex_{\ell i} + \dex_{ir} - \dex_{\ell r}) \leq \\
            \leq\ &\big(\text{since $\dex$ is semimetric, $\dex_{\ell i} + \dex_{ir} \geq \dex_{\ell r}$} \big) \leq cost(\newfhat).
    \end{align*}
    Since the cost of the obtained legal plan $\newfhat$ cannot be less than the cost
    of an optimal plan $\fhat$, the performed update of $\newfhat$ has not changed its
    cost, and the updated $\newfhat$ is still an optimal plan. The described above
    re-routing procedure is repeatedly performed on $\newfhat$ until $\newfhat_{ii}$
    reaches $M = \min{\{\pex_i, \qex_i\}}$.
    
    Finally, to see why the statement of the lemma holds, we observe that the value
    of \emdnew is the cost of any optimal transportation plan, and the cost of $\newfhat$
    in particular. However, the cost of $\newfhat$ does not depend on $\newfhat_{ii}$,
    since, due to semimetricity of $\dex$, mass $\newfhat_{ii}$ gets transported at
    cost $\dex_{ii} = 0$. Thus, $M$ can be subtracted from $\pex_i$, $\qex_i$, and
    $\newfhat_{ii}$, without affecting the total cost of $\newfhat$. The solution of
    the latter modified transportation problem, however, is exactly
    \begin{align*}
        &\memdnew([P_1, \dots, P_{i - 1}, P_i - M, P_{i+1}, \dots, P_n],\\
        &\phantom{x}[Q_1, \dots, Q_{i - 1}, Q_i - M, Q_{i+1}, \dots, Q_n], D).
    \end{align*}
\end{proof}
\vsa
\end{changed}

Now, we will state our main result about the efficient computation of \emdsent
as Theorem~\ref{thm:emdsent-fast-comp}, whose constructive proof will describe the
computation method.

\vsb
\begin{theorem}
    Under Assumptions 1 and 2, \emdsent between network states $P = [P_1, \dots, P_n]$
    and $Q = [Q_1, \dots, Q_n]$ defined over network $G = \tuple{V, E}$, $(|V| = n, |E| = m)$
    can be computed precisely in time
    \vsa
    $$
        \bigoh{n_\Delta (m + n \sqrt{\log{U}} + n_\Delta^2 \log{(n_\Delta n U)})}.
    $$
\label{thm:emdsent-fast-comp}
\vsc\vsb
\end{theorem}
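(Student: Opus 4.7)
The plan is to exploit the definition of $\memdsent$ as a constant-weighted sum of four $\memdnew$ instances, and to drive each $\memdnew$ computation down to a problem whose effective size depends on $n_\Delta$ rather than $n$. Concretely, I would (i) apply the two reduction lemmas to collapse each $\memdnew$ instance to one with only $\bigoh{n_\Delta}$ non-empty regular bins; (ii) compute just the required entries of the ground distance $D$ via $n_\Delta$ integer-weight single-source shortest-path (SSSP) runs from the affected users; and (iii) solve the reduced transportation problem with a scaling min-cost-flow algorithm. The four $\memdnew$ instances needed for $\memdsent$ contribute only a constant factor.

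For step (i), consider one of the four instances $\memdnew(P', Q', D)$ arising in $\memdsent$, where $P'$ and $Q'$ are positive (or negative) parts of $G_1$ and $G_2$. A user whose opinion is unchanged between the two snapshots contributes identical mass to bin $i$ of $P'$ and $Q'$. The shortest-path ground distance $D$ derived from $A^{ext}$ is semimetric ($D_{ii} = 0$, non-negativity, triangle inequality from being a shortest path), so Lemma~\ref{lemma:reduction2} applies and lets us iteratively subtract $\min\{P'_i, Q'_i\}$ from every unchanged bin without altering $\memdnew$. After this pass, the only surviving non-empty regular bins are those of the $n_\Delta$ opinion-changing users; Lemma~\ref{lemma:reduction1} then removes the remaining $n - n_\Delta$ empty bins along with the matching rows and columns of $D$. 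The bank-bin apparatus of $\memdnew$ is preserved because the capacities $P^{(i)}_j$ are determined by the total-mass mismatch, which the cancellations leave invariant.

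For step (ii), we need entries of $D$ only from the $n_\Delta$ surviving sources. Assumption~2 makes the weights of $A^{ext}$ positive integers bounded by $U$, so Thorup's (or the Ahuja--Mehlhorn--Orlin--Tarjan) integer-weight SSSP algorithm produces each shortest-path tree in $\bigoh{m + n\sqrt{\log U}}$, for a total of $\bigoh{n_\Delta(m + n\sqrt{\log U})}$. The cluster-to-cluster distances $d_{ij}$ used by $\memdnew$ and the bank-bin ground distances follow by cheap minimizations over the computed entries. For step (iii), the reduced transportation instance has $\bigoh{n_\Delta}$ supply/demand nodes (including the participating bank bins), $\bigoh{n_\Delta^2}$ arcs, and integer arc costs bounded by $\bigoh{nU}$ (any shortest path uses at most $n-1$ edges of weight $\le U$). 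Orlin's scaling min-cost-flow algorithm therefore solves it in $\bigoh{n_\Delta^3 \log(n_\Delta n U)}$. Summing the three stages across the four $\memdnew$ instances produces the claimed bound.

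The main obstacle is step (i): I must carefully verify that the bank-bin construction of $\memdnew$ commutes with iterated application of Lemma~\ref{lemma:reduction2}, so that the post-reduction instance is still a bona fide $\memdnew$ problem over the original cluster structure rather than merely an $\memd$ instance with an ad hoc ground distance. The key observation is that the cancellation in Lemma~\ref{lemma:reduction2} only zeroes out diagonal flow (cost $\dex_{ii}=0$), does not touch the rows/columns of $\dex$ corresponding to bank bins, and preserves the per-cluster sums that define the capacities $P^{(i)}_j$; thus the reduced problem is exactly the $\memdnew$ instance obtained by applying the original definition to the two reduced histograms restricted to the $n_\Delta$ changed users.
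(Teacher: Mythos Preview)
Your overall decomposition into (i) reduction via Lemmas~\ref{lemma:reduction1}--\ref{lemma:reduction2}, (ii) $n_\Delta$ integer-weight SSSP runs, and (iii) a min-cost-flow solve matches the paper's structure, and step~(ii) is essentially identical to what the paper does. The gap is in step~(iii), and it stems from an incorrect claim in your step~(i) analysis.

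You assert that the reduced transportation instance has $\bigoh{n_\Delta}$ supply/demand nodes ``including the participating bank bins,'' and you justify this by claiming that Lemma~\ref{lemma:reduction2}'s cancellations ``preserve the per-cluster sums that define the capacities $P^{(i)}_j$,'' so that the reduced problem is the $\memdnew$ instance on the restricted histograms. This is false. The cancellation subtracts $\min\{P_k,Q_k\}$ from both $P_k$ and $Q_k$; the total-mass mismatch $\sum Q_s-\sum P_s$ is indeed preserved, but the numerator $\sum_{v_k\in C_i}P_k$ in the bank-capacity formula is \emph{not}. Hence reduction and the $\memdnew$ extension do not commute, and you cannot rebuild the bank bins from the $n_\Delta$-sized histograms. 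More concretely, in the paper's setup (one bank per regular bin) the banks on the lighter-histogram side have capacity proportional to that bin's original mass, so every \emph{active} user in the lighter state carries a non-zero bank bin---this can be $\Theta(n)$ of them, not $\bigoh{n_\Delta}$. Lemma~\ref{lemma:reduction2} does nothing to these banks (the matching bank on the heavier side is zero, so $\min=0$), and Lemma~\ref{lemma:reduction1} cannot remove them. The reduced problem therefore has $n_\Delta$ suppliers but $n+n_\Delta$ consumers.

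The paper's proof accepts this unbalanced structure and handles it with an algorithm designed for unbalanced bipartite min-cost flow: Goldberg--Tarjan's cost-scaling algorithm augmented with the two-edge push rule of Ahuja et al., whose running time is governed by the \emph{small} side of the bipartition. That is the missing ingredient; a generic scaling min-cost-flow algorithm applied to an instance with $\Theta(n)$ demand nodes and $\Theta(n_\Delta n)$ arcs will not give you the stated bound.
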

\begin{proof}
    We will focus on the efficient computation of the first summand $\memdnew(P^+, Q^+, D(P, +))$
    in the definition~(\ref{eq:emdsent-def}) of $\memdsent(P, Q, D)$,
    as computation of three other summands is procedurally equivalent
    and takes the same time.
    For the analysis of the computation of $\memdnew(P^+, Q^+, D(P, +))$, let us assume,
    without loss of generality, that
    $\sum_{i=1}^{n}{P_i^+} \geq \sum_{j=1}^{n}{Q^+_j}$. Let us also assume, for the
    ease of explanation, that the histograms are extended with one bank per bin.
    By definition~(\ref{eq:emdnew-def}), $\memdnew(P^+, Q^+, D(P, +))$ is the solution of a transportation
    problem with supplies $\pexp = [P^+_1, \dots, P^+_n, 0,\\ \dots, 0]$, demands
    $\qexp = [Q^+_1, \dots, Q^+_n, B_1, \dots, B_n]$ and ground distance $\dexpp$, where
    $B_i$ is the bank bin attached to bin $Q_i$, and the histograms and the ground distance
    extended according to the definition~(\ref{eq:emdnew-def}) of \emdnew.
    
    Now, we can apply Lemmas~\ref{lemma:reduction1} and \ref{lemma:reduction2} to reduce
    the size of the obtained transportation
    problem. From Assumption~2, $\dexpp$ is semimetric. Non-negativity and identity
    of indiscernibles straightforwardly follow from Assumption 2 and the definition
    of the length of a shortest path. Subadditivity follows from the shortest path
    problem's optimal substructure. Thus, we can apply Lemma~\ref{lemma:reduction2} to
    each pair $\pexp_i$, $\qexp_i$ of corresponding suppliers and consumers, and due
    to Assumption 1, a large number $(n - n_\Delta)$ of them have equal
    values. As a result, many suppliers and consumers become empty. Then, due to
    Lemma~\ref{lemma:reduction1}, all the obtained empty bins can be disregarded.
    If we put $M_i = \min{\{P^+_i, Q^+_i\}}$, then the reduced transportation problem
    is defined for suppliers
        $[P^+_{i_1} - M_{i_1}, \dots, P^+_{i_{n_\Delta}} - M_{i_{n_\Delta}}]$
    and consumers
        $[Q^+_{j_1} - M_{j_1}, \dots, Q^+_{j_{n_\Delta}} - M_{j_{n_\Delta}}, B_1, \dots, B_n]$,
    and ground distance $\dexpp$ that contains only the rows and columns corresponding to the remaining
    suppliers and consumers. The remaining suppliers and non-bank consumers are those that
    correspond to the users who have different opinion in $P^+$ and $Q^+$, and the number
    of such users, due to Assumption 1, is at most $n_\Delta$. The bank bins, however, do
    not get affected by Lemma~\ref{lemma:reduction2} (since only the banks of the lighter
    histogram can have non-zero mass) in $\qexp$ and hence do not get removed,
    yet they get removed from $\pexp$ due to Lemma~\ref{lemma:reduction1} as being empty.
    Thus, we have ended up with an unbalanced transportation problem, where the number
    $n_\Delta$ of suppliers is much less than the number $n + n_\Delta$ of consumers.
    
    Now, in order to compute $\memdnew(P^+, Q^+, D(P, +))$, we need to compute $\dexpp$
    and to actually solve the obtained transportation problem.
    
    Due to the structure of the reduced transportation problem, we need to compute only
    a small part of $\dexpp$. Since there are at most $n_\Delta$ suppliers, we need to
    solve at most $n_\Delta$ instances of single-source shortest path problem with
    at most $n_\Delta + n$ destinations. Since, due to Assumption 2, edge costs in the
    network are integer and bounded by $U$, each instance of a single-source shortest
    path problem can be solved using Dijkstra's algorithm based on a combination of a
    radix and a Fibonacci heaps~\cite{ahuja1990faster} in time
    \vspace{-0.05in}
    $$
        T_{sssp} = \bigoh{m + n\log{\sqrt{U}}}.
    $$
    \noindent (Notice, that if we assumed $\sum_{i=1}^{n}{P_i^+} \leq \sum_{j=1}^{n}{Q^+_j}$,
    and the reduced $\pexp$ contained $n_\Delta + n$ bins, we would \emph{not} need to
    run $n_\Delta + n$ instances of Dijkstra's algorithm. Instead, we would invert the
    edges in the network and compute shortest paths in reverse, still performing only
    $n_\Delta$ single-source shortest path computations.)
    
    Next we approach the solution of the reduced transportation problem with known
    ground distances. This problem can be viewed as a min-cost network flow problem
    in an unbalanced bipartite graph. Since, due to Assumption 2, edge costs are integers
    bounded by $U$, our min-cost flow problem can be solved using Goldberg-Tarjan's
    algorithm~\cite{goldberg1987solving} augmented with the two-edge push
    rule of~\cite{ahuja1994improved} in time
    
    \vsc\vsa
    \begin{align*}
        T_{transp} = \bigoh{n_\Delta m + n_\Delta^3 \log{(n_\Delta \max_{i,j}{\dexpp_{ij}})}}.
    \end{align*}
    \vsc
    
    \noindent Since no shortest path has more than $(n-1)$ edge, and the edge costs are bounded
    by $U$, the expression for time simplifies to
    
    \vsc
    \begin{align*}
        T_{transp} = \bigoh{n_\Delta m + n_\Delta^3 \log{(n_\Delta n U)}}.
    \end{align*}
    \vsc

    Thus, the total time for computing $\memd(P^+, Q^+, D(P, +))$ and, consequently,
    $\memdsent(P, Q, D)$ is
    \begin{align*}
        T =\ &\bigoh{n_\Delta T_{sssp} + T_{transp}} =\\ &\bigoh{n_\Delta (m + n\log{\sqrt{U}} + n_\Delta^2 \log{(n_\Delta n U)})}.
    \end{align*}
\end{proof}
\vsb

Notice that, if the social network is sparse, \begin{changed}that is  $m = \bigoh{n}$\end{changed},
and the number of changes $n_\Delta$
is bounded, then, according to Theorem~\ref{thm:emdsent-fast-comp}, \emdsent is
computable in time $\bigoh{n}$.

\vsbb
\section{Experimental Results}
In this section, we report experimental results, demonstrating the utility of
\emdsent in applications and comparing it with other distance measures. We
also study the scalability of our implementation of \emdsent.

\vsa
\subsection{Experimental Setup}

\textbf{Real-World Data:} Our Twitter dataset, based on data
from~\cite{macropol2013act} contains 10k users, each having an
average of 130 follower-followee edges. Among the tweets sent by these users
between May-2008 and August-2011, we select those relevant to political topics,
such as \emph{``Obama''}, \emph{``GOP''}, \emph{``Palin''},
\emph{``Romney''}. We break the entire observation period into quarters and, in each
quarter, assess every user's polarity with respect to our topics of interest.
Polarity of all users within one quarter comprise a network state.

\textbf{Synthetic Data:} We also perform experiments on synthetic 
scale-free networks of sizes $|V|$ from 10k to 200k and scale-free exponents from
$-2.9$ to $-2.1$. To generate the first network state, a
number of initial adopters are chosen uniformly at random, and approximately equal
numbers of them adopt \sentpos and \sentneg opinions. Each subsequent network state
$G_{i+1}$ is randomly generated from the preceding network state $G_i$ as follows.
A number of $G_i$'s neutral users get a chance to be activated. Each
of them adopts an opinion from her neighbors with probability
$\mathbb{P}_{nbr}$ and a random opinion with probability $\mathbb{P}_{ext}$.
If a user is to adopt an opinion from the neighbors, which opinion to adopt is decided
in a probabilistic voting fashion based on the numbers of active in-neighbors of each
kind.

\textbf{Distance Measures:} We compare \emdsent with the following distance
measures.\\
\phantom{x}{\emph{$\bullet\ \text{hamming}(P, Q)$}. The Hamming distance is a
        representative of all the distance measures performing coordinate-wise comparison.
    }\\
\phantom{x}{\emph{$\bullet\ \text{quad-form}(P, Q, L) = \sqrt{(P-Q)L(P-Q)^T}$.}
Quadratic-Form Distance~\cite{hafner1995efficient} based on the Laplacian $L$~\cite{merris1994laplacian}
        of the network. It takes the differences of opinions of the corresponding users and combines them based on the network's structure
    }\\
\phantom{x}{\emph{$\bullet\ \text{walk-dist}(P, Q) = \frac{1}{n}\|cnt(P) - cnt(Q)\|_1$}. Compares vectors $cnt(P) = [cnt(P_1), \dots, cnt(P_n)]$
        of users' ``contention'', where $cnt(P_i)$ is the amount by which the $i$'th user's
        opinion deviates from the opinion of this user's average active in-neighbor. Thus,
        \emph{walk-dist} summarizes how different the network's users
        are from their respective neighbors.
    }
    
\subsection{Detecting Anomalous Network States}

\noindent{\textbf{Synthetic Data:}} In a series of network states, we want to detect which ones are anomalous.
In particular, we are interested in the anomalies which are hard to detect
by observing the summary of the social network (e.g., the number of new
activations). Thus, in experiments with synthetic data, to simulate an anomaly,
we change the values of $\mathbb{P}_{nbr}$ and $\mathbb{P}_{ext}$ preserving
their sum, thereby, affecting only qualitatively the process of new users'
activation. In a series of network states, we compute the distances between
the adjacent states, normalize these distances by the number of active users,
and scale.
Then, spikes in the resulting series of distances are considered anomalies.

A qualitative analysis of anomaly detection on synthetic data is presented
in Fig.~\ref{fig:anomaly-synthetic}.
\vsa
\begin{figure}[h!]
\centering
\includegraphics[width=0.48\textwidth]{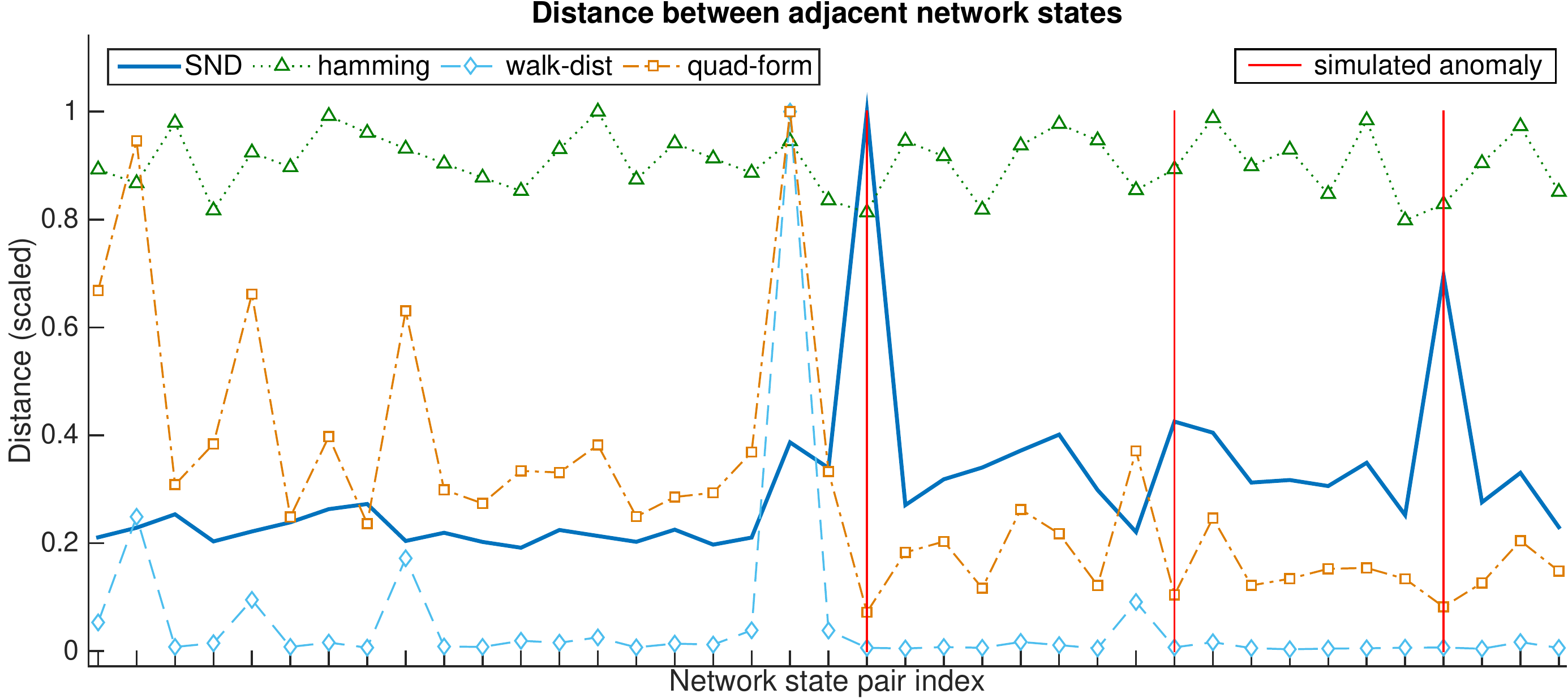}
\vsc\vsa
\caption{Anomaly detection on synthetic data. $|V| = 20\text{k}$, scale-free exponent
$\gamma = -2.3$. A series of $40$ network states is generated using
$\mathbb{P}_{nbr} = 0.12$ and $\mathbb{P}_{ext} = 0.01$ for normal and
$\mathbb{P}_{nbr} = 0.08$ and $\mathbb{P}_{ext} = 0.05$ for anomalous
network states' generation.
}
\label{fig:anomaly-synthetic}
\vsb
\end{figure}
For each simulated anomaly, \emdsent
produces a well noticeable spike, while other distance measures do not
recognize such anomalies.

In order to quantify the performance of the competing distance measures at
detecting the true simulated anomalies, we create a simple anomaly score
$S_t = (d_t - d_{t - 1}) + (d_t - d_{t + 1})$,
where $d_t$ is the value of a given distance measure at time $t$ normalized by the
number of users active at time $t$ and scaled. We rank the network state transitions
for each compared distance measure by $S_t$ in decreasing order and compute true and false positive rates for increasing ranks.
The corresponding ROC curves are displayed in Fig.~\ref{fig:anomaly-synthetic-roc}.
\vsb
\begin{figure}[h!]
\centering
\includegraphics[width=0.3\textwidth]{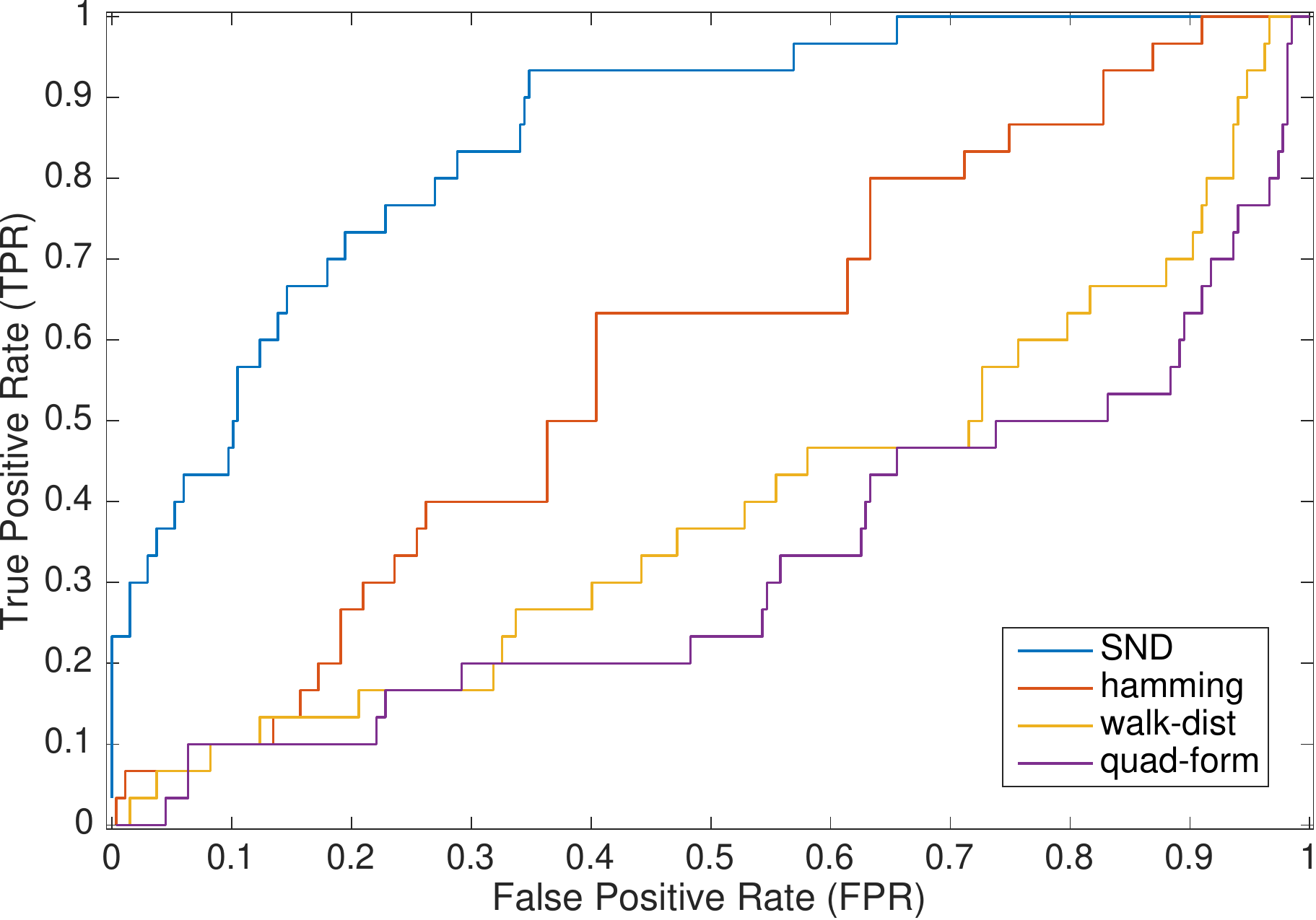}
\vsb
\caption{ROC curves comparing the quality of anomaly detection by different
distance measures in a series of $300$ network states over synthetic network
with $|V| = 30k$ and scale-free exponent $\gamma = -2.3$. The network states are
generated using $\mathbb{P}_{nbr} = 0.08$ and $\mathbb{P}_{ext} = 0.001$ for
normal and $\mathbb{P}_{ext} = 0.011$ and $\mathbb{P}_{nbr} = 0.07$ for
anomalous instances.}
\label{fig:anomaly-synthetic-roc}
\vsb
\end{figure}
\emdsent's accuracy dominates that of competing distance measures throughout the
spectrum of false positive rates. Particularly, for false positive rates up to $0.3$,
\emdsent achieves a true positive rate of $0.83$, while the next best distance measure
(\emph{hamming}) achieves only $0.4$.

\noindent{\textbf{Twitter Data:}} To obtain the ground truth for anomaly detection on our Twitter dataset, we collect ``search interest'' data from Google Trends\footnote{\scriptsize\url{http://www.google.com/trends/explore}}
and cross-check this data with American Presidents\footnote{\scriptsize\url{http://www.american-presidents-history.com}}
log of important political
events in the US. The anomaly detection results for topic ``Obama'' are shown
in Fig.~\ref{fig:anomaly-obama}.
\begin{figure}[h!]
\centering
\includegraphics[width=0.48\textwidth]{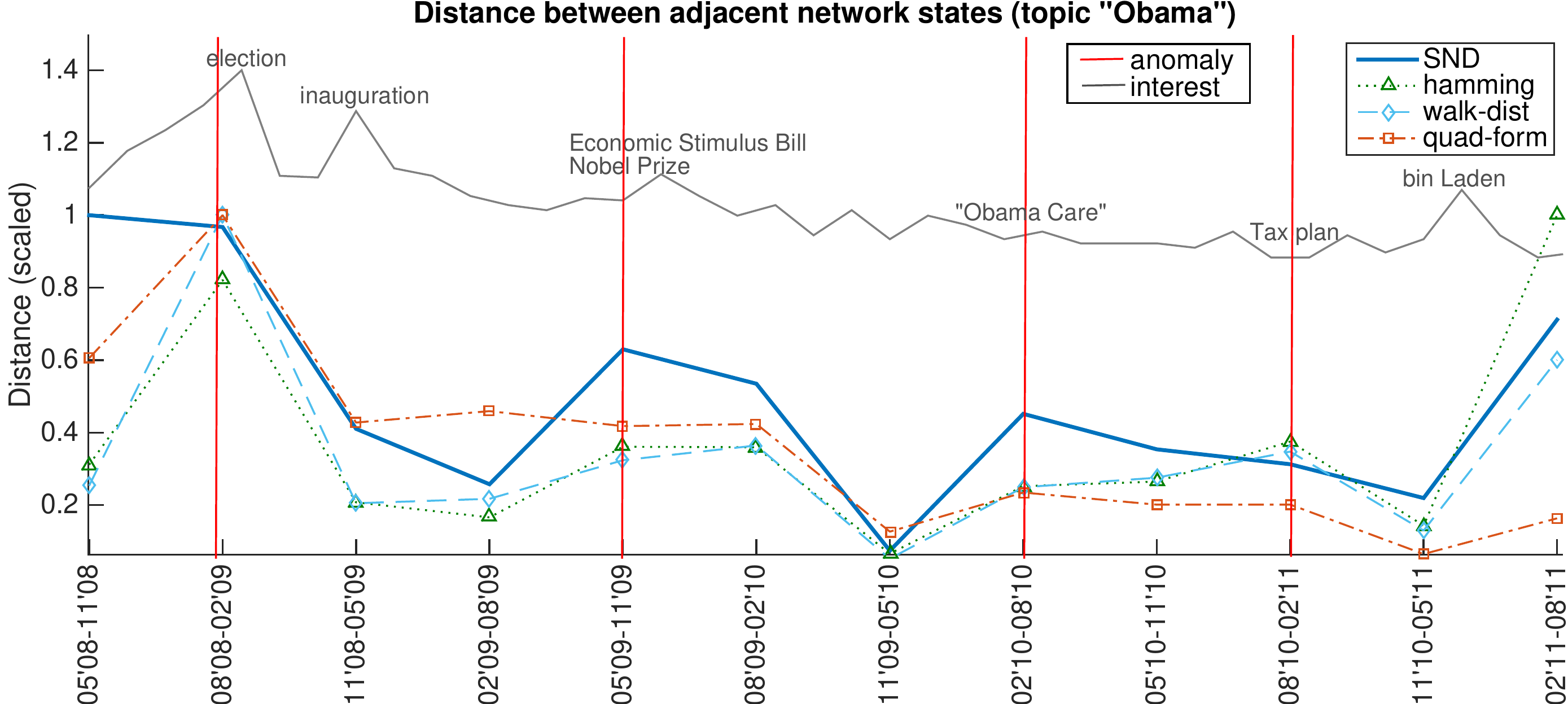}
\vsc
\caption{Anomaly detection on Twitter data (May'08-Aug'11) for topic ``Obama''.
The plots for distances between network states are accompanied by the plot showing
Google Trends' (scaled) interest in topic ``Obama''. Network states detected to
be anomalous by at least one distance measure are indicated with red vertical lines.}
\label{fig:anomaly-obama}
\vsbb
\end{figure}

We can distinguish two types of network states and, hence, events based on
\emdsent's behavior relatively to other distance measures. One type is the
events corresponding to network states where \emdsent agrees with other
distance measures. Two examples are (i) the first anomaly -- Barack Obama's election for the President of the US, and (ii) bin Laden's death being the
last spike on the Google Trends curve (even though, all distance measures noticeably increase
their value during the last quarter, we do not mark this quarter as anomalous,
since we do not have the distance values for the next quarter.) These events
are unlikely to have been perceived differently by the US users of Twitter
and, hence, probably have not provoked a polarized response.

\begin{changed}
The other type
of events are those where \emdsent noticeably disagrees with other distance
measures. For example, during quarters 05'09-11'09, the Economic Stimulus
Bill had a highly polarized response in the House of Representatives\footnote{\scriptsize
http://www.nytimes.com/2009/01/29/us/politics/29obama.html}, with no Republican
voting in its favor.
Another such anomaly takes place during quarters 02'10-08'10, when the
Affordable Care Act (``Obama Care'') was introduced, and which was and still remains
a very controversial topic. The latter can be seen from the House vote
distribution\footnote{\scriptsize Democrats -- 219 yeas, republicans -- 212 nays (\url{http://www.healthreformvotes.org/congress/roll-call-votes/h165-111.2010})}
and from the fact that, according to
socialmention.com\footnote{\scriptsize \url{http://socialmention.com/search?q=\%22obama+care\%22}},
even in October 2015, ``Obama Care'' is still perceived equally positively and
negatively in microblogs.
\end{changed}

\vsa
\subsection{Predicting User Opinions}

Given a series of states of a social network we want to predict the unknown
opinions of the users in the current network state $G_0$ based on the observed recent
$G_{-t}$ $(t \in \mathbb{N})$ and the (incomplete) current network states. For example,
if certain Twitter users have not tweeted (enough) in the current quarter, we may want
to predict the opinions of these users in the current quarter based on the observed
opinions of all users of the network. We assume that during the periods corresponding
to the observed recent network states $G_{-t}$, the social network evolved ``smoothly'',
that is, the recent states can predict the current state. Under this assumption, we use
a distance measure to compute the distances $dist(G_{-t}, G_{-t+1})$ between the adjacent
past network states, then, extrapolate the obtained series to estimate the distance $d^*$
from the most recent $G_{-1}$ to the yet unknown \emph{complete} current network state. Then, we assign different
opinions to the target users in the current network state, trying to make the distance
$dist(G_{-1}, G_0^*)$ from the most recent to the modified current network state as close
to estimate $d^*$ as possible. The search for the best assignment of opinions to the
target users is randomized --- the number of the uniformly randomly generated opinion assignments
for all target users is considerably lower than the total number of possible assignments
(we use $100$ random opinion assignments in each experiment). In each experiment, we
uniformly randomly select $20$ active users \begin{changed}-- with approximately equal number of positive and
negative users --\end{changed} in the current network state, predict their
opinions and measure the prediction accuracy. This procedure is repeated $10$ times, and
mean accuracies and standard deviations are reported.

\begin{changed}
The predictions are made using the above distance-based method with \emdsent
as well as other distance measures. To put the prediction performance of these methods
in context, we add into comparison two non-distance-based methods -- one basic, and one
state-of-the-art -- that make predictions based on
the known quantified opinions of the users and the network's structure. One such method,
\emph{nhood-voting}, derives the opinion of each target user based on the opinions of this
user's active in-neighbors in a probabilistic voting fashion, or selects it uniformly
randomly in the absence of active in-neighbors. Another method,
\emph{community-lp}~\cite[IV.B]{conover2011predicting} detects communities in the
network via label propagation and, then, predicts user opinions based on these users'
membership in the discovered communities.
\end{changed}

We experiment on both synthetic and real-world data. For synthetic data, we generate a
scale-free network with $n=10\text{k}$ users and scale-free exponent $\gamma = -2.5$.
A series of network states is generated using the same algorithm as in the case with
anomaly detection, with probabilities of opinion adoption from the neighborhood and
from the ``external source'' ranging between $0.001$ and $0.2$. The number of initial
adopters in the first network state is $800$. We use $3$ most recent network states
to estimate the distance from the most recent \begin{changed}to the incomplete current network state\end{changed}.

\begin{changed}
Results for opinion prediction are summarized in Table~\ref{tab:opinion-prediction-accuracy}.
There are three important observations:

$\bullet$ Firstly, among the distance-based methods, the one that uses \emdsent always
performs best, with an average prediction accuracy of 74-75\% and a consistently low
standard deviation. This suggests that \emdsent captures more opinion dynamics-specific
information than the other distance measures, and should be preferred, particularly, when
such simple statistics as the rate of new user activation are uninformative.
    
$\bullet$ Secondly, \emdsent-based prediction method works considerably better than method
\emph{nhood-voting} that bases the opinion prediction for each user on the opinions of the user's in-neighbors. This emphasizes the importance of analyzing opinion dynamics at the
level of the entire network, rather than for each egonet in isolation.
    
$\bullet$ Lastly, \emdsent-based method outperforms the state-of-the-art method
\emph{community-lp}, based on community detection via label propagation. In our
experiments, \emph{community-lp}'s prediction accuracy is 57-65\%, while this
method's authors report the accuracy of 95\% for
their data~\cite{conover2011predicting}.
The likely cause of such a discrepancy is a \emph{very} high level of homophily in their data
(users almost exclusively follow those having the same opinion), while in our less
homophilous data, \emph{community-lp} performs worse by capturing only users' reachability
by the opinions of each kind, and \emdsent performs better by looking for the \emph{most
probable} opinion propagation scenario.
\end{changed}
\begin{table}[ht]
\centering
\footnotesize
\begin{tabular}{| l | r | r || r | r |}
\hline
\multicolumn{5}{|c|}{\rule{0pt}{2.3ex} \textbf{User Opinion Prediction Accuracy, \%}} \\ \hline
\rule{0pt}{2.3ex}\multirow{2}{*}{\phantom{xxl}Method} & \multicolumn{2}{c||}{Synthetic Data} & \multicolumn{2}{c|}{Real-World Data} \\ \cline{2-5}
    & \multicolumn{1}{c|}{$\mu$} & \multicolumn{1}{c||}{$\sigma$} & \multicolumn{1}{c|}{$\mu$} & \multicolumn{1}{c|}{$\sigma$} \\ \hline \hline
\rule{0pt}{2.3ex}\emdsent   & \textbf{74.33} & \textbf{\phantom{0}2.65} & \textbf{75.63} & \textbf{5.60} \\ \hline
\rule{0pt}{2.3ex}hamming    & 68.44 & 12.34 & 68.13 & 5.80 \\ \hline
\rule{0pt}{2.3ex}quad-form  & 66.67 & 13.58 & 67.50 & 9.63 \\ \hline
\rule{0pt}{2.3ex}walk-dist & 56.22 & 15.35 & 31.88 & 9.98 \\ \hline \hline
\rule{0pt}{2.3ex}nhood-voting & 62.11 & \phantom{0}8.58 & 61.25 & 5.82 \\ \hline
\rule{0pt}{2.3ex}\begin{changed}community-lp\end{changed}& 65.25 & \phantom{0}9.43 & 56.87 & 8.43 \\ \hline
\end{tabular}
\caption{Means $\mu$ and standard deviations $\sigma$ of user opinion prediction accuracies.}
\label{tab:opinion-prediction-accuracy}
\vsc
\end{table}

\vsa
\subsection{Sensitivity to Opinion Dynamics Models}
In this section, we show the effectiveness of \emdsent in detecting qualitative
changes in the network's evolution w.r.t. an advanced opinion dynamics model,
that cannot be spotted by the distance measures performing coordinate-wise comparison.
We generate a number of pairs $\tuple{G_1, G_2}$ of subsequent
network states over a synthetic scale-free network. Some of these pairs correspond
to \emph{normal transitions}, while others correspond to \emph{anomalous transitions}
in the network's evolution. For the normal transitions, $G_2$ is generated from
$G_1$ using the Independent Cascade Model with Competition (ICC)~\cite{carnes2007maximizing}.
For anomalous transitions, most new activations in $G_2$ happen randomly, not relying
on the network's structure. We study the distances assigned to normal and anomalous
transitions by \emdsent and $\ell_1$, and plot them as functions of the number
$n_\Delta$ of users having different opinion in $G_1$ and $G_2$ of each transition.
According to the results in Fig.~\ref{fig:snd-vs-ell1}, \emdsent clearly separates
anomalous transitions from normal ones. $\ell_1$, however, cannot discern anomalous
transitions,
as its value is mostly determined by $n_\Delta$, which is representative of the
distance measures performing coordinate-wise comparison.

\vsbb
\begin{figure}[h]
  \begin{center}\hspace{-0.5cm}
    \mbox{
      \subfigure {\includegraphics[width=0.25\textwidth]{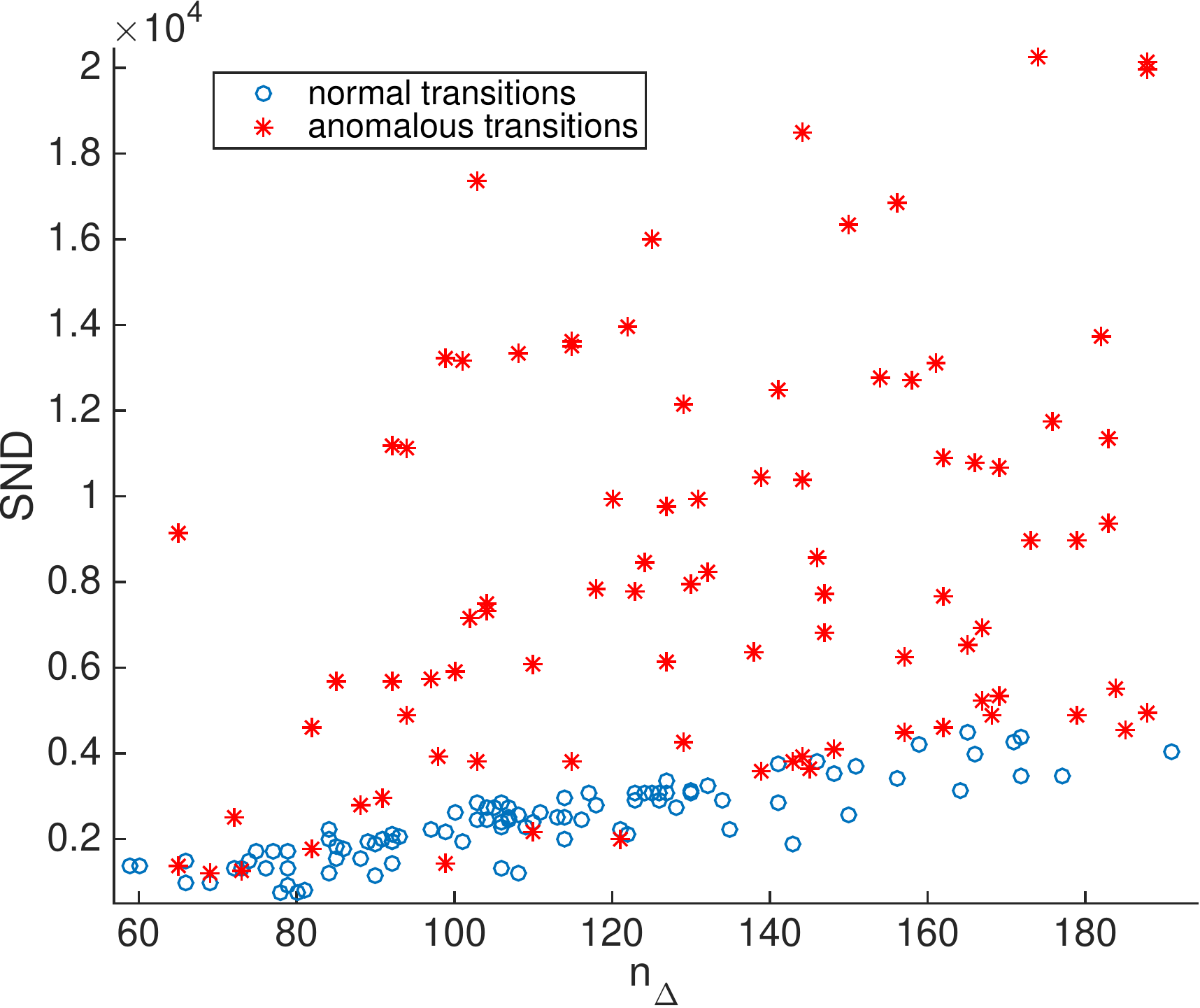}}\hspace{-0.3cm}
      \subfigure {\includegraphics[width=0.25\textwidth]{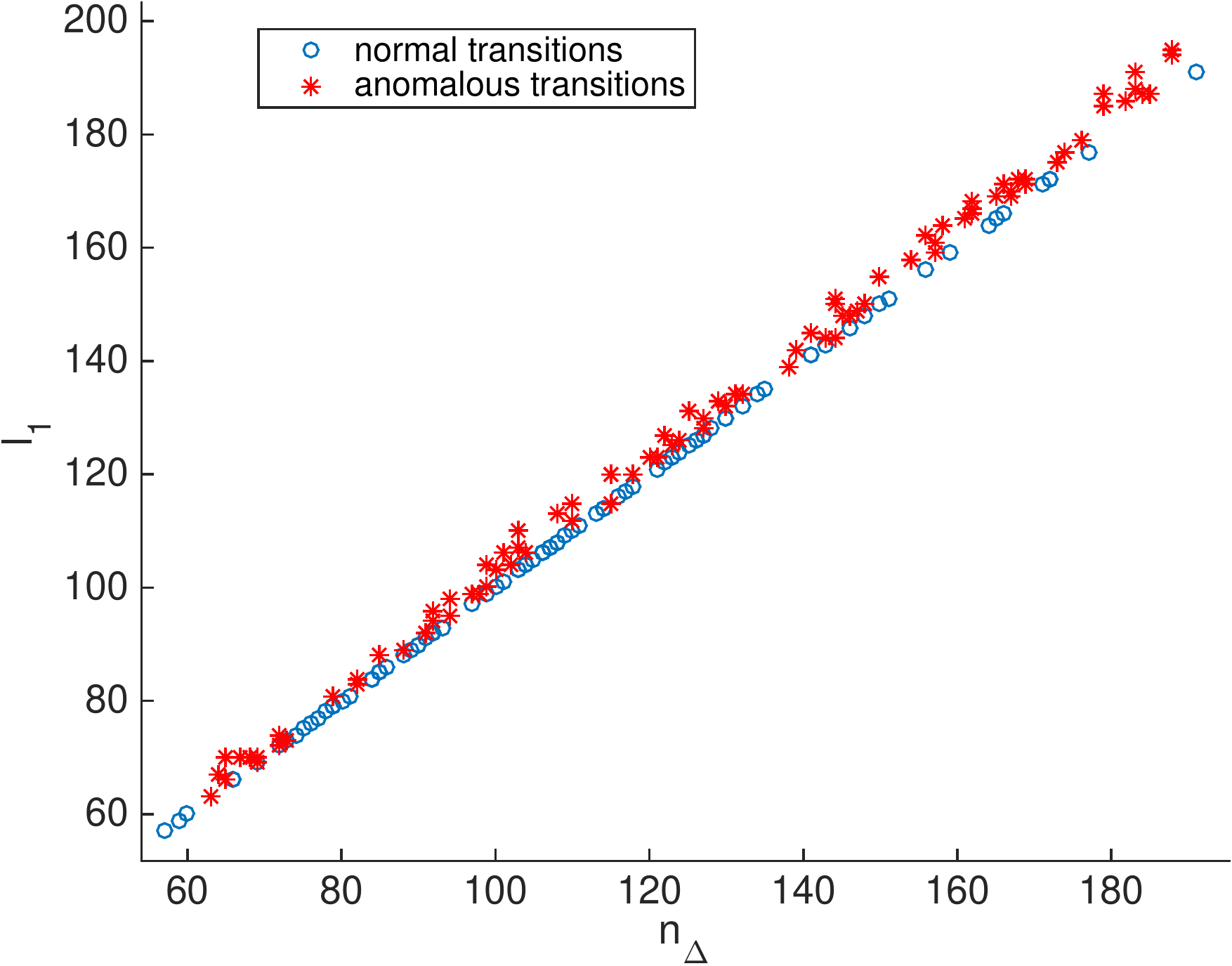}}
    }
    \vsc
    \caption{\emdsent and $\ell_1$ distances between network states of normal (ICC)
and anomalous (random) transitions.}
    \label{fig:snd-vs-ell1}
  \end{center} 
\end{figure}

\ifconfver
    \vsc\vsc
\else
    \phantom{x}
\fi
\subsection{Scalability of SND}
We implemented\footnote{\url{https://cs.ucsb.edu/~victor/pub/ucsb/dbl/snd/}}
\emdsent in MATLAB, with parts written in C++. We use a min-cost network flow
solver CS2~\cite{goldberg1997efficient} that implements Goldberg-Tarjan's
algorithm~\cite{goldberg1987solving}, but, unlike it is prescribed by
Theorem~\ref{thm:emdsent-fast-comp}, does not use the two-edge push rule
of~\cite{ahuja1994improved}. Additionally, for computing shortest paths,
our implementation of Dijkstra's algorithm uses a priority queue based
on a binary heap, rather than a combination of a Fibonacci and a radix
heaps~\cite{ahuja1990faster}. As a result, our implementation of \emdsent
scales slightly worse than linearly as guaranteed by Theorem~\ref{thm:emdsent-fast-comp},
but still very well to be applicable to real-world social networks.
Fig.~\ref{fig:emd-scaling-fixed-ndiff} shows how our implementation
of \emdsent based on the proposed efficient method scales in the number
$n$ of users in the network in comparison with a direct computation
of \emdsent based on a CPLEX linear solver. Our implementation's
scalability in the number $n_\Delta$ of users who have changed their
opinion is shown in Fig.~\ref{fig:emdsent-scaling-fixed-n}.
\vsc
\begin{figure}[h]
\centering
\begin{minipage}{0.22\textwidth}
    \includegraphics[width=\textwidth]{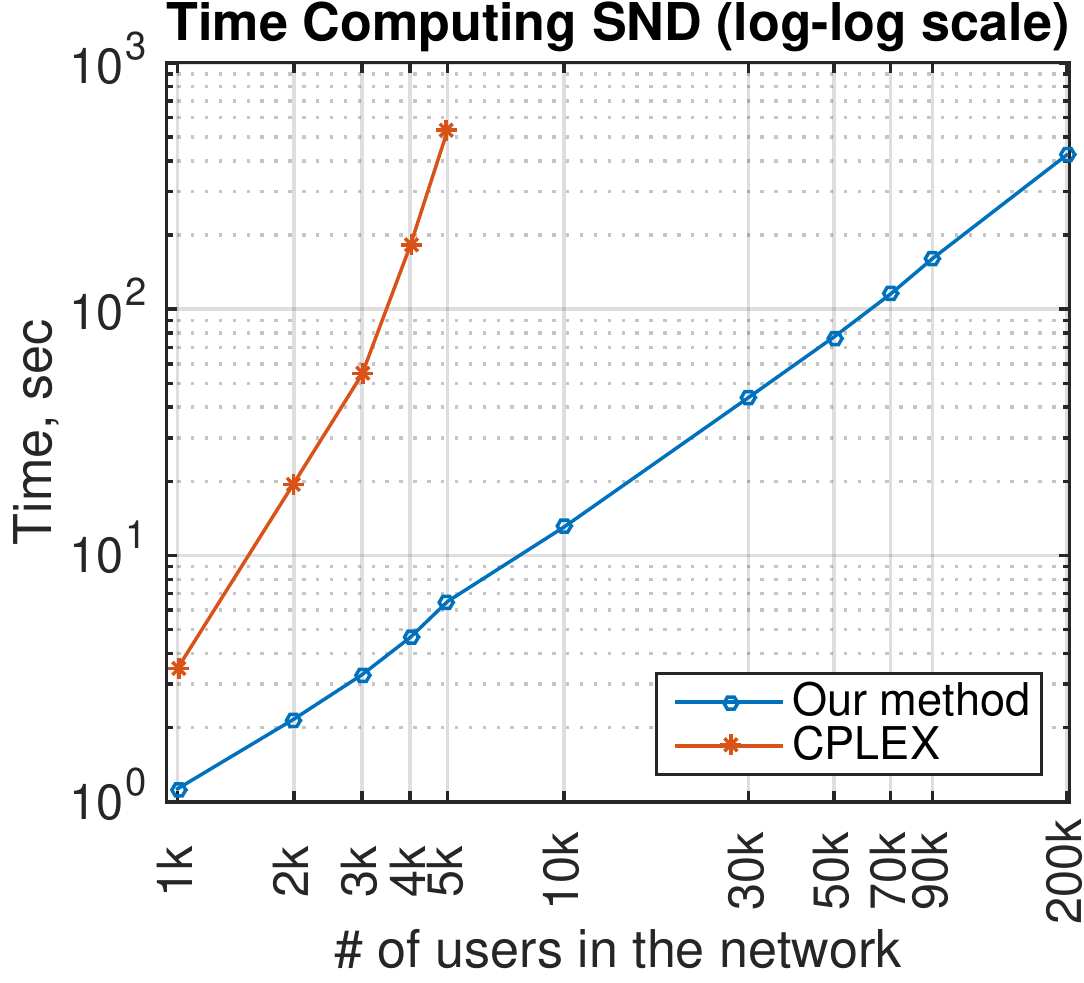}
    \vsc
    \caption{Time for computing \emdsent when the number of users having different opinion
    is fixed at $n_\Delta = 1000$ and the total number of users $n$ in the network grows up 200k.}
    \label{fig:emd-scaling-fixed-ndiff}
    \vsc
\end{minipage}
\quad
\begin{minipage}{0.22\textwidth}
    \includegraphics[width=\textwidth]{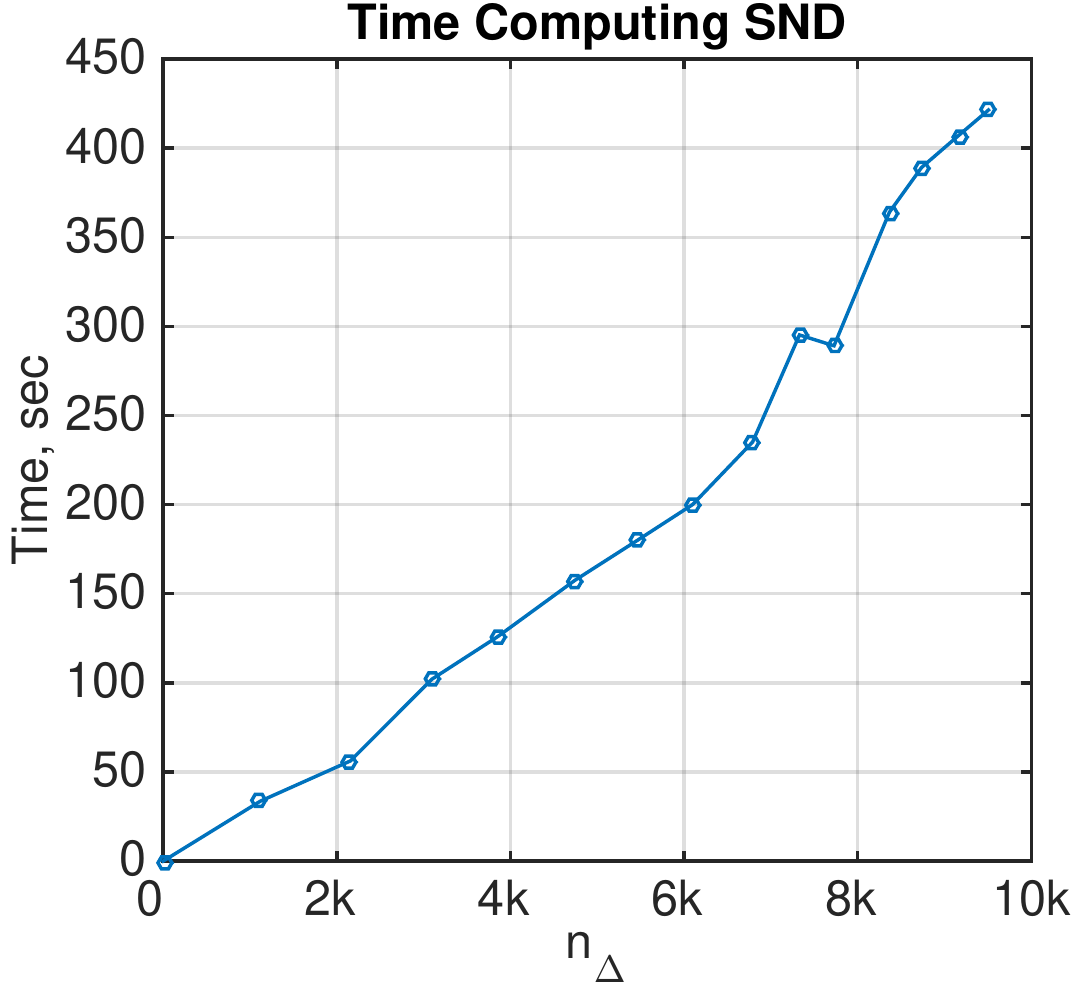}
    \vsb\vsa
    \caption{Time for computing \emdsent using our method when the network size is
    fixed at $n=20\text{k}$, and the number of users $n_\Delta$ having different
    opinions grows up to 10k.}
    \label{fig:emdsent-scaling-fixed-n}
\end{minipage}
\vsbb
\end{figure}

\iffullver
    \vsb
\else
    \vspace{0.05in}
\fi
\section{Related Work}
There is a large number of existing distance measures used in vector spaces,
including $\ell_p$, Hamming, Canberra, Cosine, Kullback-Leibler, and
Quadratic Form~\cite{hafner1995efficient} distances. However, none of them is
adequate for the comparison
of network states, since these distance measures either compare vectors
coordinate-wise, thereby, not capturing the interaction between users
in the network, or in the case of Quadratic Form distance, capture the user
interaction in a very limited and uninterpretable way.

Existing graph-oriented distance measures are also unsuitable for comparing
network states with polar opinions. The first class of such distance measures is graph isomorphism-based distance
measures, such as \emph{largest common subgraph}~\cite{bunke1998graph}. These
distance measures are node state-oblivious, and, hence are not applicable to
the comparison of network states.
Another class of graph distance measures is \emph{Graph Edit Distance
(GED)-based} measures~\cite{gao2010survey} that define the distance between
two networks as the cost of the optimal sequence of edit operations, such as node or
edge insertion, deletion, or substitution, transforming one network into another. GED can be
node state-aware, but its value is not interpretable from the
opinion dynamics point of view, and even its approximate computation takes
time cubic in $|V|$ \begin{changed}(a single computation of GED on a 10k-node network on
our hardware takes about a month)\end{changed}.

A third class of distance measures includes \emph{iterative distance measures}~\cite{blondel2004measure,
leicht2006vertex,melnik2002similarity}, which
express similarity of the nodes of two networks recursively, use a fix-point iteration
to compute node similarities, and, then, aggregate node similarities to obtain the 
similarity of two networks. Iterative distance measures share the problem of GED -- they do
not capture the way opinions spread in the network.

The last class includes \emph{feature-based} distance
measures \cite{berlingerio2012netsimile, wilson2005pattern, zhu2005study},
which compare either the distributions of local node properties (e.g., degree,
clustering coefficient) or the spectra of two networks.
Despite their efficient computability, such distance measures do not fit the
comparison of network states with polar opinions. The spectral distance
measures are inadequate because they do not deal with node states directly\footnote{Even
if node states
are artificially encoded into a network's structure, there is still
a possibility for two structurally different networks to have identical
spectra and, hence, a zero spectral distance.},
while other feature-based distance measures only deal with summaries based
on opinion of each kind and, thus, cannot capture the competition of
polar opinions.

\vsb
\begin{changed}
\section{Limitations}
Despite the demonstrated effectiveness and efficiency of \emdsent, there are scenarios
in which its use is either prohibitively or unnecessarily expensive.

$\bullet$ One reason to choose a simpler distance measure,
such as $\ell_p$, over \emdsent is the latter's computational cost. While it is asymptotically linear
in the number of nodes, it can, potentially, be too high in practice for networks having
100M+ nodes, where a single computation of \emdsent can take several days. If the
use of a simpler distance measure is undesirable, one can partition the network into
clusters of tractable size and perform the \emdsent-based analysis on each cluster.

$\bullet$ Another
scenario when using \emdsent may be excessive is when the changes in the rate of
new user activation reveal enough information for the target application (for example,
the activation rate alone is clearly enough to detect the US presidential election day),
and, in such a case, the distance measures as simple as Hamming distance may suffice.

\vsb
\section{Future Research}
Among the directions for future research are the following.

\vspace{0.01in}
$\bullet$ Since \emdsent is, effectively, the first distance measure
designed specifically for the comparison of states of a social network containing
competing opinions, one potential future research direction is using \emdsent in
other applications operating in a metric space setting, such as network state
classification, clustering, and search.

$\bullet$ Additionally, it may be lucrative to combine \emdsent with non-distance-based
methods. Thus, in the method of~\cite{conover2011predicting} that predicts opinions
based on the content of the users' tweets, the objective function can be augmented with an
\emdsent-based term, thereby, performing opinion fitting at both the micro-level of each user
and the macro-level of the entire network.

$\bullet$ Finally, it may be fruitful to design a distance measure that would capture
changes in both the opinions of the users and the structure of the social
network simultaneously. Such a distance measure would be more computationally complex than \emdsent
due to the network alignment requirement, yet, useful for the comparison of network
states defined over very different networks.

\end{changed}

\ifconfver
    \vsb
\fi
\section{Conclusion}

In this paper, we proposed Social Network Distance (\emdsent)
-- the first distance measure for comparing the states of a social network
containing competing opinions. Our distance measure quantifies how likely
it is that one state of a
social network has evolved into another state under a given model of
polar opinion propagation.
Despite the high computational complexity of the transportation problem
underlying \emdsent, we propose a linear-time algorithm for its precise
computation, making \emdsent applicable to real-world online social networks.
We demonstrate the usefulness of \emdsent in detecting
anomalous network states
and predicting user opinions in both synthetic and real-world data,
where it consistently outperforms other distance measures.
Our anomaly detection method achieves a true positive rate (TPR) of $0.83$,
while the next best method's TPR is only $0.4$. The accuracy of \emdsent-based method
for user opinion prediction averages at $75.63\%$, which is $7.5\%$ higher than that of
the next best method. We also show that, unlike the distance measures performing
coordinate-wise comparison, \emdsent can detect qualitative changes in the network's
evolution pattern.

Our results emphasize
    the importance of taking into account user locations
    in the analysis of social networks,
and
    that the analysis of opinion dynamics at the level of an entire
    social network can provide more information about the network's
    evolution than the methods operating at the level of egonets.

\bibliographystyle{abbrv}
{
    \bibliography{paper}
}

\end{document}